\DeclareSymbolFont{cyrletters}{OT2}{wncyr}{m}{n}
\DeclareMathSymbol{\Sha}{\mathalpha}{cyrletters}{"58}
\newtheorem{theorem}{Theorem}
\newtheorem{corollary}[theorem]{Corollary}
\newtheorem{lemma}{Lemma}
\newtheorem{defn}{Definition}
\newcommand{\Z}{\mathbb{Z}}
\newcommand{\ZZ}{\Z_{\ge 0}}
\newcommand{\C}{\mathcal{C}}
\newcommand{\cF}{\mathcal{F}}
\newcommand{\ccF}{\F(\!(\!D\!)\!)}
\newcommand{\Fq}{\mathbb{F}_q}
\newcommand{\Fqq}{\mathbb{F}_{q^2}}
\newcommand{\D}{d_{\text{\tiny free}}}
\newcommand{\de}{{\delta}}
\newcommand{\rk}{\mathrm{rank}}
\newcommand{\zv}{{\mathbf{v}}}
\newcommand{\zw}{{\mathbf{w}}}
\newcommand{\grs}{\mathcal{GRS}}
\newcommand{\pp}{\mathcal{P}}
\newcommand{\al}{\alpha}
\newcommand{\ga}{\gamma}
\newcommand{\Ga}{\Gamma}
\newcommand{\ho}{H_{\mathrm{o}}}
\newcommand{\he}{H_{\mathrm{e}}}
\newcommand{\cCo}{\cC_{\mathrm{o}}}
\newcommand{\cCe}{\cC_{\mathrm{e}}}
\newcommand{\wt}{\mathrm{wt}}
\newcommand{\cC}{\mathcal{C}}
\newcommand{\cZ}{\mathcal{Z}}
\newcommand{\ul}{\underline}
\newcommand{\ol}{\overline}
\newcommand{\F}{\mathbb{F}_q}
\newcommand{\wh}{\widetilde{H}}
\begin{document}

\title{Construction of Unit-Memory MDS Convolutional Codes}

\author{Chin Hei Chan and Maosheng Xiong
\thanks{Department of Mathematics, Hong Kong University of Science and Technology, Clear Water Bay, Kowloon, Hong Kong. Email address: Chin Hei Chan (chchanam@connect.ust.hk), Maosheng Xiong (mamsxiong@ust.hk).}
}

\date{}
\maketitle

\begin{abstract}
Maximum-distance separable (MDS) convolutional codes form an optimal family of convolutional codes, the study of which is of great importance. There are very few general algebraic constructions of MDS convolutional codes. In this paper, we construct a large family of unit-memory MDS convolutional codes over $\F$ with flexible parameters. Compared with previous works, the field size $q$ required to define these codes is much smaller. The construction also leads to many new strongly-MDS convolutional codes, an important subclass of MDS convolutional codes proposed and studied in \cite{GL2}. Many examples are presented at the end of the paper.

\end{abstract}

\begin{keywords}
Convolutional code, unit-memory code,  maximum-distance separable (MDS) codes, strongly-MDS codes, Reed-Solomon code, free distance.
\end{keywords}

\section{Introduction}

The class of convolutional codes was invented by Elias in 1955 \cite{E} and has been widely in use for wireless, space, and broadcast communications since the 1970s. However, compared with the theory of linear block codes, convolutional codes are not so well understood. In particular, there are only a few algebraic constructions of convolutional codes with good designed parameters.

Let $\F$ be the finite field of order $q$ where $q$ is any prime power. Let $\cF:=\ccF$ be the field of Laurent series over $\Fq$. Following \cite{P2}, a $q$-ary rate $k/n$ and degree $\delta$ convolutional code, or an $(n,k,\delta)_q$ code for short, is a $k$-dimensional subspace of $\cF^n$ over $\cF$ with degree $\delta$. It is known that $\de$ is an invariant and is the external degree of a minimal encode that realizes the convolutional code (see also \cite{JZ,MC}). 

Let $\C$ be an $(n,k,\de)_q$ code and let $\D$ be the free distance of $\C$. The four parameters $n,k,\delta,\D$ are of fundamental importance because $k/n$ is the rate of the code, $\delta$ and $\D$ determine respectively the decoding complexity and the error correcting capability of $\C$ with respect to some decoding algorithms such as the celebrated Viterbi algorithm \cite{VI}. For these reasons, for given rate $k/n$ and $q$, generally speaking, it is desirable to construct convolutional codes with relatively small degree $\delta$ and relatively large free distance $\D$. The generalized Singleton bound for an $(n,k,\delta)_q$ convolutional code $\C$, in its most general form, proposed and proved by Rosenthal and Smarandache \cite{R}, states that the free distance $\D$ of $\C$ must satisfy
\begin{eqnarray} \label{1:mds} \D \le (n-k) \left(\left\lfloor \frac{\delta}{k}\right\rfloor+1\right)+\delta+1.\end{eqnarray}
If the inequality is attained as an equality, then $\C$ is called a maximal-distance-separable (MDS) convolutional code. As in the classical case, MDS convolutional codes form an optimal family of convolutional codes, the study of which is of great importance.

For any rate $k/n$ and any degree $\delta$, Rosenthal and Smarandache \cite{R} established the existence of $(n,k,\delta)_q$ MDS convolutional codes over some finite field $\F$ by techniques from algebraic geometry without giving explicit constructions. Then in a beautiful follow-up paper \cite{S}, building upon ideas from Justesen \cite{JUS3}, the authors provided an explicit construction of MDS convolutional codes for each rate $k/n$ and each degree $\delta$ over some $\F$. However, in their construction, a relatively large field size $q$ is required: it is necessary that $n|(q-1)$ and $n \le \frac{q-1}{2}$. Hence the question was raised by the authors as to whether or not it is possible to come up with new constructions of MDS convolutional codes so that the field size $q$ can be reduced and the condition $n|(q-1)$ can be dropped.

In this paper we provide a partial but affirmative answer to this question. Roughly speaking, we construct unit-memory $(n,k,\de)_q$ MDS convolutional codes for any rate $k/n$ and relative small but flexible $\de$ as long as $n \le q+1$, hence the strength of the result is almost comparable to that of classical MDS linear block codes over $\F$. Moreover, a nice feature of the construction is that many of the codes constructed satisfy a much stronger MDS condition: they are actually ``strongly-MDS'' convolutional codes, an important subclass of MDS convolutional codes proposed and studied by Gluesing-Luerssen, Rosenthal and Smarandache in \cite{GL2}. We summarize our results as follows.

\begin{theorem}\label{1:main} Let $q$ be any prime power.

(i). There exists an $(n,k+\de,\de)_q$ MDS convolutional code for any $q \geq 3$ and positive integers $n,\de, k$ such that $2 \de+k \le n \le q$.

(ii). There exists an $(n,k+\de,\de)_q$ strongly-MDS convolutional code for any $q \geq 3$ and positive integers $n,\de,k$ such that $3 \de+k-1 \le n \le q$.

(iii). If $q>2$ is even, then there exists a $(q+1,q+2-2k-\de,\de)_q$ MDS convolutional code for any positive integers $k,\de$ such that $\de+k \le \frac{q+1}{2}$.

(iv). There exists a $(q+1,k+2 \de,2 \de)_q$ strongly-MDS convolutional code for any $q \geq 5$ and positive integers $k,\de$ such that $6\de+k \le q+2$.
\end{theorem}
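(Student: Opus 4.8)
The plan is to exhibit a single explicit unit-memory encoder built from generalized Reed--Solomon data and then read off the four parts by specializing the evaluation set and the parameters. Fix distinct evaluation points $\ga_1,\dots,\ga_n$ in $\F$ (for (i) and (ii)) or in $\F\cup\{\infty\}$ (for (iii) and (iv)), put $\ka=k+\de$, and consider a $\ka\times n$ generator $G(D)=G_0+G_1D$ whose block $G_0$ is the Reed--Solomon/Vandermonde matrix with $i$-th row $(\ga_l^{\,i})_l$ for $0\le i\le\ka-1$, and whose $G_1$ has exactly $\de$ nonzero rows, placed so that they evaluate the ``top'' monomials $x^{\ka},\dots,x^{\ka+\de-1}$. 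First I would settle the coarse invariants: the $k$ rows coming only from $G_0$ have degree $0$ and the $\de$ rows carrying a $D$ have degree $1$, so the external degree is $\de$ and the dimension is $k+\de$; then I would check that $G(D)$ is minimal (row-reduced) and non-catastrophic, so that it realizes a genuine $(n,k+\de,\de)_q$ code.

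The core is the free-distance estimate, which I would phrase through the coordinate Laurent series. Writing the input as $u(D)=\sum_t u_tD^t$ and $U_i(D)=\sum_t u_{t,i}D^t$, a direct expansion gives, for each point $\ga_l$, the identity
\begin{equation*}
v_l(D)=\sum_{i=0}^{k-1}\ga_l^{\,i}\,U_i(D)\;+\;(1+\ga_l^{\,\de}D)\sum_{s=0}^{\de-1}\ga_l^{\,k+s}\,U_{k+s}(D),
\end{equation*}
so that every truncated codeword block $v_t$ is the evaluation vector of a polynomial whose degree I can control: each block-polynomial has degree at most $\ka+\de-1=k+2\de-1$, and the hypothesis $2\de+k\le n$ is exactly what keeps this below $n$, so that the Reed--Solomon bound (a nonzero polynomial of degree $<m$ has fewer than $m$ zeros among the $\ga_l$) applies to every active block. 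The structural point I would exploit is that the shift factor $1+\ga_l^{\,\de}D$ carries no extra zeros among the (nonzero) evaluation points, so the boundary block $v_0=u_0G_0$ is the evaluation of a nonzero polynomial of degree $\le\ka-1$, while $v_{L+1}=u_LG_1$, when present, evaluates a polynomial $x^{\ka}R(x)$ with $\deg R\le\de-1$. Generically these two blocks already carry weight at least $(n-\ka+1)+(n-\de+1)$, which exceeds the target $n-k+1$ as soon as $n\ge 2\de-1$.

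The delicate step, and the main obstacle, is upgrading the generic estimate to the \emph{exact} free distance $\D=n-k+1$ of \eqref{1:mds} and to the strongly-MDS profile, while disposing of the degenerate configurations in which a boundary block is forced to vanish. Here the naive pairwise bound on two extreme blocks is too weak, and I expect the real argument to be a \emph{joint} zero-count over the whole sliding window: one bounds the total number of zeros of the entire family $\{P_t\}$ simultaneously and then matches it with an explicit low-weight codeword. For (ii) and (iv) I would in fact compute the complete column-distance sequence and show $d_j^c=(n-\ka)(j+1)+1$ for all $0\le j\le L$ with $L=\lfloor\de/(n-\ka)\rfloor$, which by \cite{GL2} forces both the MDS property and the strongly-MDS property at once. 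This reduces to proving that the relevant square submatrices of the banded sliding generator are nonsingular (superregular), and keeping every intermediate block-polynomial of degree strictly below $n$ so that no wrap-around spoils the Reed--Solomon estimate; this is precisely what the stronger hypothesis $3\de+k-1\le n$ buys, and I expect verifying superregularity to be the most technical part of the whole argument.

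Finally, parts (iii) and (iv) transplant the construction to the projective line, adjoining the point at infinity so that $n=q+1$; the extra coordinate realizes the leading-coefficient functional and shifts the admissible top monomial by one. Part (iv) is then the strongly-MDS construction of (ii) carried over with memory $2\de$, the evenness of the degree reflecting the symmetric role of the adjoined point, and the condition $6\de+k\le q+2$ being the length-$(q+1)$ form of $3\de'+k-1\le n$ with $\de'=2\de$. For the high-rate family (iii) I would instead pass to the dual, using that the dual of an MDS convolutional code is again MDS \cite{R}; here $n-\ka=2k+\de-1$ is the redundancy, and the essential input is that in even characteristic the projective evaluation set carries the additional regularity responsible for the length-$(q+2)$ MDS phenomenon (hyperovals). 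This is exactly what permits the range $\de+k\le\frac{q+1}{2}$ when $q$ is even, and what fails in odd characteristic, so isolating and exploiting this even-characteristic regularity is the crux of part (iii).
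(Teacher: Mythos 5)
Your construction is set up as a generator-matrix (image) construction, whereas the paper works throughout with $G(D)=\wh_0+\wh_1D$ as a \emph{parity-check} matrix of $V$: it splits a parity-check matrix $H$ of an MDS block code into pieces $H_0$ and $H_1$ whose associated block codes are themselves MDS, and then the free-distance lower bound $\D\ge\min\{d_0+d_1,d\}=n-k+1$ is delivered wholesale by the quoted result of Aly et al.\ (Theorem \ref{2:con}), so that Lemma \ref{2:lem3} settles part (i) in two lines. That dual viewpoint is not cosmetic: it is what lets the paper avoid exactly the analysis you identify as ``the delicate step.'' In your image-side formulation the bound on the two boundary blocks is, as you say yourself, too weak once the input is spread over several time steps, and the passage to a ``joint zero-count over the whole sliding window'' together with the superregularity of the banded sliding generator is stated as an expectation rather than proved. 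Those are precisely the hard points and they are not routine; the paper's substitutes for them are the explicit case analysis (Cases 1--5) in the proof of Theorem \ref{4:thm2}(i) and the direct computation of $d_1^c$ in Theorems \ref{3:thm1}(iii), \ref{5:thm3} and \ref{5:thm4}, where the inequalities $\wt(v_1)\ge\ga-\de+1\ge\de$ (resp.\ $\wt(v_1)\ge 2(\ga-\de)\ge 2\de$) are exactly what the hypotheses $3\de+k-1\le n$ (resp.\ $6\de+k\le q+2$) are calibrated to produce. As written, the proposal therefore leaves the main lower bounds unestablished.

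Two further gaps. For length $q+1$ the mechanism in the paper is not the projective point at infinity but a descent from $\Fqq$: the block code is a cyclic or constacyclic code of length $q+1$ whose parity-check rows live over $\Fqq$, and each such row is rewritten as two rows over $\F$; this doubling is the actual reason the degree in (iv) is $2\de$, not a ``symmetric role of the adjoined point.'' For (iii), your route through hyperovals and through ``the dual of an MDS convolutional code is again MDS'' is unsupported: duality is only known to preserve the maximum-distance-profile property (\cite[Theorem 5.2]{GL2}), not MDS-ness of the free distance, and the citation of \cite{R} does not cover this. The paper instead obtains (iii) from the even/odd row-splitting of Theorem \ref{51:thm4} (where evenness of $q$ is what makes $\cCo$ and $\cCe$ MDS) followed by the row-deletion Lemma \ref{2:lem4} to reach the stated parameters $(q+1,q+2-2k-\de,\de)_q$; your proposal does not produce this parameter set.
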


We remark that (i) and (ii) of Theorem \ref{1:main} summarize Theorem \ref{3:thm1} in Section \ref{sec3} and Theorem \ref{4:thm2} in Section \ref{sec4}; (iv) summarizes Theorems \ref{5:thm3} and \ref{5:thm4} in Section \ref{sec5}; (iii) is obtained by using Theorem \ref{51:thm4} in Section \ref{sec5} and Lemma \ref{2:lem4} in Section \ref{sec2}. It shall be noted that in (ii) and (iv) the codes also have a maximum distance profile. Interested readers may review these theorems for details.

It was proved in \cite[Theorem 3.11]{GL2} that for every positive integers $n,k,\de$ such that $n-k$ divides $\de$ and for every prime number $p$ there exists a strongly-MDS code with parameters $(n,k,\de)$ over a suitably large field of characteristic $p$, and it was conjectured (see \cite[Conjecture 3.13]{GL2}) that for all $n>k>0$ and for all $\de \ge 0$ there exists an $(n,k,\de)_q$ code over a sufficiently large field which is both strongly-MDS and has a maximum distance profile. (ii) and (iv) of Theorem \ref{1:main} can be considered as a small step further toward this conjecture.

In this paper we only consider construction of unit-memory MDS convolutional codes, usually over a large alphabet, but this class of codes should be of great interest both in theory and in applications. First, unit-memory convolutional codes (UMC) are an interesting class of convolutional codes because their block length can be chosen to agree with the word length of computers or microprocessors that are used in the coding and decoding process. Lee \cite{LEE} suggested that short binary UMC's are attractive with Viterbi decoding as the inner coding component of a concatenated system. Thommesen and Justesen \cite{T} further derived bounds on the distance profile and free distance of binary UMC's and suggested that UMC's may have superior properties. Ebel \cite{EB} has made a thorough discussion on various methods in search of UMC's with good parameters. Second, as was pointed out in \cite{GL2}, a convolutional code over a finite alphabet can be practically identified with a finite linear state machine (LFSM) which has redundancy and which is capable of correcting processing errors. In a series of recent papers \cite{HA}--\cite{HA3}, Hadjicostis and Verghese showed how to error protect a given LFSM with a larger redundant LFSM capable of detecting and correcting state transition errors, and the construction of such redundant system boils down to constructing convolutional codes with good free distance over an alphabet which is in general not binary. With this respect, strongly-MDS convolutional codes constructed in this paper are particularly suited and may have potential for such applications.

Our technique of constructing MDS convolutional codes relies on the idea proposed by Aly, Grassl, Klappenecker, R\"otteler and Sarvepalli \cite[Theorem 3]{ALY}, which generalized the famous method of Piret \cite{P2} and provided a powerful machinery to construct convolutional codes from linear block codes. While this method has been employed multiple times by various authors (\cite{CH},\cite{GU0}-\cite{GU2},\cite{YA},\cite{ZH}) to obtain some new classical and quantum MDS convolutional codes, the MDS convolutional codes obtained in these works usually have degree not larger than 2. Our contribution in this paper is the realization that by carefully adopting this method \cite[Theorem 3]{ALY}, coupled with cyclic codes of general type \cite{LXG}, we can construct unit-memory MDS convolutional codes with fairly large degree. 


Let us now give the structure of the paper. In Section \ref{sec2} we review the basic theories of cyclic codes of general type and convolutional codes, in particular ``strongly-MDS''  convolutional codes, column distances and \cite[Theorem 3]{ALY}, all of which will turn out to be useful in the paper. As for the construction of unit-memory MDS convolutional codes $(n,k,\de)_q$, we treat the case $n \le q-1$ in Section \ref{sec3}, the case $n=q$ in Section \ref{sec4}, and the case $n=q+1$ in Section \ref{sec5}. In Section \ref{sec6} we provide several explicit examples of MDS convolutional codes over the field $\mathbb{F}_8$. In Section \ref{conclude} we conclude the paper.

\section{Preliminaries}\label{sec2}
In this section we review some basic theories of cyclic codes of general type and convolutional codes.

\subsection{Cyclic codes of general type}
Throughout the paper we fix some standard notation.

Let $\F$ be the finite field of order $q$, where $q$ is any prime power. An $[n,k,d]_q$ code is a $k$-dimensional subspace of $\F^n$ over $\F$ with minimum Hamming distance $d$. The well-known Singleton bound of an $[n,k,d]_q$ code is $d \le n-k+1$. If the inequality is achieved, it is called a maximal-distance-separable (MDS) code.

\begin{defn}
Let $0 \ne f(x) \in \F[x]$ be a monic polynomial. A nonzero ideal $\cC$ of the principal ideal ring $\F[x]/(f(x))$ is called a cyclic code of type $f$, or an $f$-cyclic code for short.
\end{defn}

The concept of $f$-cyclic codes naturally generalizes the concept of cyclic, negacyclic and constacyclic codes, which have proved quite useful in many recent works. The basic theory of $f$-cyclic codes is very similar to that of cyclic, negacyclic and constacylic codes as well and hence must have been known for a long time. In particular, for any $f$-cyclic code $\cC$, there is a unique monic polynomial $g \in \F[x]$ of least degree such that $g(x)|f(x)$ and $\cC=(g(x))$. This $g(x)$ is called the generator polynomial of $\cC$. Because of lacking proper references, $f$-cyclic codes was introduced in \cite{LXG} where some of their properties were derived, with the sole purpose of providing more flexibility in constructing MDS block codes. We are grateful to Patrick Sol\'e, who kindly pointed out that $f$-cyclic codes were called pseudo-cyclic codes and polycyclic codes in the literature which have been investigated, though the results which we collect below (see \cite[Theorems 9 and 10]{LXG}) may not be so easy to locate in the references.

\begin{lemma}\label{2:lem1}
Let $f \in \F[x]$ be a monic polynomial, $\deg f=n \ge 1$ and $f(x) \mid \left(x^{q-1}-1\right)$. Let $\theta$ be a primitive element of $\F$. Let $g(x)=\prod_{i=d_1}^{d_2} (x-\al_i)$, where $\al_i=\theta^{a+bi}$, and $d_1,d_2,a,b$ are some fixed integers. Assume that $g(x)|f(x)$. Let $\cC$ be an $f$-cyclic code over $\F$ with generator polynomial $g$. If
\begin{equation}\label{con1}
\frac{q-1}{(b,q-1)} \ge n \ge d_2-d_1+1 \ge 1,
\end{equation}
then $\cC$ is an $[n,n-d_2+d_1-1,d_2-d_1+2]_{q}$ MDS code.
\end{lemma}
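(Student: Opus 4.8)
The plan is to treat the three assertions — length, dimension, and minimum distance — separately, and to reduce the minimum-distance claim to a Vandermonde nonvanishing argument of BCH type. The length is immediate since $\deg f = n$, and the dimension will follow once I know that $\deg g = d_2 - d_1 + 1$. For this I would first record a small divisibility fact that gets used twice: the roots $\al_i = \theta^{a+bi}$ coincide for two indices $i \ne i'$ exactly when $\frac{q-1}{(b,q-1)}$ divides $i - i'$. Since any two indices in $\{d_1, \ldots, d_2\}$ differ by at most $d_2 - d_1 \le n - 1 < \frac{q-1}{(b,q-1)}$ by hypothesis, the $\al_i$ are pairwise distinct; hence $g$ has exactly $d_2 - d_1 + 1$ distinct roots, $\deg g = d_2 - d_1 + 1$, and the dimension of $\cC$ is $n - \deg g = n - d_2 + d_1 - 1$.

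Next I would pin the minimum distance down from both sides. The Singleton bound gives $d \le n - k + 1 = d_2 - d_1 + 2$, so it suffices to prove $d \ge d_2 - d_1 + 2$. The key structural observation is that, because $g \mid f$, every element of the ideal $(g)$ in $\F[x]/(f)$ reduces to a representative of degree $< n$ that is itself divisible by $g$. Thus a nonzero codeword is a polynomial $c(x) = \sum_j c_j x^j$ of degree $< n$ with $g \mid c$, and in particular $c(\al_i) = 0$ for every $d_1 \le i \le d_2$.

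The main step — and the one place where the hypothesis $n \le \frac{q-1}{(b,q-1)}$ is really needed — is a contradiction argument. Suppose $c$ had weight $w \le d_2 - d_1 + 1$, with nonzero coefficients at positions $j_1 < \cdots < j_w$ in $\{0, \ldots, n-1\}$. Evaluating $c(\al_i) = 0$ at the $w$ consecutive indices $i = d_1, \ldots, d_1 + w - 1$ (all of which lie in $[d_1, d_2]$ precisely because $w \le d_2 - d_1 + 1$) gives a homogeneous linear system in $c_{j_1}, \ldots, c_{j_w}$ whose matrix has entries $\theta^{(a+bi) j_l}$. Writing $i = d_1 + s$ and factoring $\theta^{(a+bd_1) j_l}$ out of the $l$-th column leaves the Vandermonde matrix in the nodes $u_l = \theta^{b j_l}$. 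By the same divisibility fact as above, $u_l = u_{l'}$ would force $\frac{q-1}{(b,q-1)} \mid (j_l - j_{l'})$, impossible for $0 < |j_l - j_{l'}| \le n - 1 < \frac{q-1}{(b,q-1)}$; so the nodes are distinct, the Vandermonde determinant is nonzero, and the only solution is $c_{j_1} = \cdots = c_{j_w} = 0$, contradicting $w \ge 1$. Hence $w \ge d_2 - d_1 + 2$, and together with the Singleton bound this yields $d = d_2 - d_1 + 2$, so $\cC$ is MDS.

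I expect the only delicate points to be bookkeeping rather than conceptual: making sure the $w$ chosen evaluation indices stay inside $[d_1, d_2]$, and carrying out the column factorization so that what remains is genuinely a square Vandermonde matrix rather than something merely resembling one. Isolating the single numerical fact that $n \le \frac{q-1}{(b,q-1)}$ forces distinctness — both of the roots $\al_i$ and of the nodes $u_l$ — keeps the argument clean, since that inequality is doing all the work behind the MDS property.
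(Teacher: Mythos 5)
Your proposal is correct: the degree count gives the dimension, and the Vandermonde argument (factor $\theta^{(a+bd_1)j_l}$ out of each column, check the nodes $\theta^{bj_l}$ are distinct because $0<|j_l-j_{l'}|\le n-1<\frac{q-1}{(b,q-1)}$) establishes the BCH-type lower bound $d\ge d_2-d_1+2$, which meets the Singleton bound. The paper itself states this lemma without proof, citing \cite[Theorems 9 and 10]{LXG}, and your argument is exactly the standard proof one would expect to find there, so there is nothing to flag beyond the degenerate edge case $b\equiv 0\pmod{q-1}$ (which forces $n=1$ and an empty code and is implicitly excluded by the definition of an $f$-cyclic code as a nonzero ideal).
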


\begin{lemma}\label{2:lem2}
Let $f \in \F[x]$ be a monic polynomial, $\deg f=n \ge 1$ and $f(x) \mid \left(x^{q^2-1}-1\right)$. Let $\theta$ be a primitive element of $\Fqq$. Let $g(x)=\prod_{i=d_1}^{d_2} (x-\al_i)$, where $\al_i=\theta^{a+bi}$, and $d_1,d_2,a,b$ are some fixed integers. Assume that $g(x)|f(x)$.
\begin{enumerate}
\item[(i)] $g \in \F[x]$ if and only if for any $i$ ($d_1 \le i \le d_2$), there exists a $j$ ($d_1 \le j \le d_2$) such that
    \begin{eqnarray} \label{con02}
    q(a+bi) \equiv a+bj \pmod{q^2-1}.
    \end{eqnarray}

\item[(ii)] Assume that $g \in \F[x]$. Let $\cC$ be an $f$-cyclic code over $\F$ with generator polynomial $g$. If
\begin{equation}\label{con12}
\frac{q^2-1}{(b,q^2-1)} \ge n \ge d_2-d_1+1 \ge 1,
\end{equation}
then $\cC$ is an $[n,n-d_2+d_1-1,d_2-d_1+2]_{q}$ MDS code.
\end{enumerate}
\end{lemma}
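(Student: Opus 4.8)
The plan is to prove (i) by a Frobenius (Galois descent) argument and (ii) by exactly the BCH-type reasoning already used for Lemma~\ref{2:lem1}, the only genuinely new feature being that the roots $\al_i$ now lie in $\Fqq$ whereas the code $\cC$ sits over $\F$. Throughout I use that $g \mid f \mid (x^{q^2-1}-1)$, so $g$ divides the separable polynomial $x^{q^2-1}-1$ and hence the roots $\al_{d_1},\dots,\al_{d_2}$ are automatically distinct.

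For (i), let $\sigma$ denote the Frobenius automorphism $y \mapsto y^q$ generating $\mathrm{Gal}(\Fqq/\F)$. Acting on coefficients, $\sigma$ sends $g$ to $\sigma(g)(x)=\prod_{i=d_1}^{d_2}\bigl(x-\al_i^{\,q}\bigr)$, and $g \in \F[x]$ if and only if $\sigma(g)=g$, i.e.\ if and only if the root sets $\{\al_i^{\,q}\}$ and $\{\al_i\}$ coincide. Since $\al_i^{\,q}=\theta^{q(a+bi)}$ and $\al_j=\theta^{a+bj}$ with $\theta$ of order $q^2-1$, the equality $\al_i^{\,q}=\al_j$ is equivalent to the congruence \eqref{con02}. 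The forward implication is then immediate; for the converse, if every $\al_i^{\,q}$ equals some $\al_j$ then $\sigma$ maps the finite set $\{\al_i\}$ injectively into itself, hence permutes it, which gives $\sigma(g)=g$ and therefore $g \in \F[x]$.

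For (ii), since $g \mid f$ the standard description of the ideal $(g)$ in $\F[x]/(f)$ gives $\dim_\F \cC = n-\deg g = n-d_2+d_1-1$, and the Singleton bound yields $d \le n-\dim_\F\cC+1 = d_2-d_1+2$ for the minimum distance $d$. It remains to prove the matching lower bound. Let $c \in \cC$ be a nonzero codeword; its degree-$<n$ representative $c(x)$ is divisible by $g$ in $\F[x]$, so $c(\al_\ell)=0$ for all $d_1 \le \ell \le d_2$. Writing the support of $c$ as $\{i_1<\cdots<i_w\}$ with $w=\wt(c)$ and assuming for contradiction that $w \le d_2-d_1+1$, I would select the $w$ equations $c(\al_\ell)=0$ for $\ell=d_1,\dots,d_1+w-1$. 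Factoring $\theta^{(a+bd_1)i_t}$ out of the $t$-th column exhibits the coefficient matrix as a diagonal matrix times the Vandermonde matrix with nodes $\theta^{b i_t}$, $t=1,\dots,w$.

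The crux of the argument is precisely that these Vandermonde nodes are distinct, and this is exactly where hypothesis \eqref{con12} enters: $\theta^{bi_t}=\theta^{bi_{t'}}$ forces $\frac{q^2-1}{(b,q^2-1)} \mid (i_t-i_{t'})$, which is impossible because $0<|i_t-i_{t'}|<n\le \frac{q^2-1}{(b,q^2-1)}$. Hence the Vandermonde determinant is nonzero, the homogeneous system forces all $c_{i_t}=0$, contradicting $c\neq 0$. Therefore every nonzero codeword has weight at least $d_2-d_1+2$, and combined with the Singleton upper bound this shows $\cC$ is an $[n,n-d_2+d_1-1,d_2-d_1+2]_q$ MDS code. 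I expect the only delicate points to be the multiset-versus-set identification in (i) and the verification that the selected $w\times w$ system factors as claimed; both become routine once separability and \eqref{con12} are in hand.
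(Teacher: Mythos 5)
Your proof is correct. Note that the paper does not prove this lemma itself but imports it from the cited preprint \cite{LXG} (Theorems 9 and 10); your argument --- Galois descent under the Frobenius $y\mapsto y^q$ for part (i), and the BCH-style Vandermonde bound with the distinctness of the nodes $\theta^{bi_t}$ guaranteed by $n\le\frac{q^2-1}{(b,q^2-1)}$ for part (ii) --- is exactly the standard route that reference takes, and all the delicate points you flag (separability forcing distinct roots, the injectivity-implies-permutation step, and the column factorization of the $w\times w$ system) are handled correctly.
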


\subsection{Convolutional codes}

In this section we first introduce the concept of ``strongly-MDS'' convolutional codes, which was proposed and studied in the beautiful paper \cite{GL2}. We follow the presentation of \cite{GL2}. For basic theories of convolutional codes, interested readers may consult the excellent textbooks \cite{JZ,MC,P2}.

As in the introduction, $\cF=\F(\!(\!D\!)\!)$ is the field of Laurent series over the finite field $\F$. Let $\cC$ be an $(n,k,\de)_q$ convolutional code of memory $\nu$, then there is a $k \times n$ minimal encoder of the form
\[G=\sum_{j=0}^{\nu}G_jD^j \in \F[D]^{k \times n}, G_j \in \F^{k \times n}, G_{\nu} \ne 0,\]
here $\nu$ is the memory of the code, and an $(n-k) \times n$ minimal parity-check matrix of the form
\[H=\sum_{j=0}^{\mu}H_jD^j \in \F[D]^{(n-k) \times n}, H_j \in \F^{(n-k) \times n}, H_{\mu} \ne 0,\]
such that
\[\cC=\left\{aG:a \in \cF^k\right\}=\left\{v \in \cF^n: H v^T=0\right\}.\]
Both $G$ and $H$ are of full rank and satisfy $GH^T=0$. The weight of $v=\sum_{j=r}^{\infty}v_jD^j \in \cF^n$ where $v_j \in \F^n$ is defined as
\[\wt(v):=\sum_{j=r}^{\infty} \wt(v_j).\]
Here $\wt(v_j)$ denotes the Hamming weight of $v_j \in \F^n$. The free distance of the code $\cC \in \cF^n$ is defined as
\[\D:=\min\left\{\wt(v): v \in \cC, v \ne 0\right\}.\]

For every $j \in \ZZ$, the truncated sliding generator matrices $G_j^c \in \F^{(j+1)k \times (j+1) n}$ and parity-check matrices $H_j^c \in \F^{(j+1)(n-k) \times (j+1)n}$ are given by

\[G_j^c:=\left[\begin{array}{cccc}
G_0&G_1&\cdots&G_j\\
&G_0 & \cdots &G_{j-1}\\
&& \ddots &\vdots\\
&&&G_0\end{array}\right],\]

\[H_j^c:=\left[\begin{array}{cccc}
H_0&&&\\
H_1&H_0 & &\\
\vdots&\vdots& \ddots & \\
H_j&H_{j-1}&\cdots&H_0\end{array}\right],\]
where we let $G_j=0$ (resp. $H_j=0$) whenever $j>\nu$ (resp., $j > \mu$); see also \cite[pp. 110]{JZ}. The $j$-th \emph{column distance} of $\cC$ is given by
\begin{eqnarray*}d_j^c&=&\min\left\{\wt\left((u_0,\ldots,u_j)G_j^c\right): u_i \in \F^k, u_0 \ne 0\right\}\\
&=& \min\left\{\wt\left(\hat{v}\right): \hat{v}=\left(\hat{v}_0,\ldots,\hat{v}_j\right)\in \F^{(j+1)n}: H_j^c \left(\hat{v}\right)^T=0, \hat{v}_0 \ne 0\right\}.
\end{eqnarray*}
The $(\nu+1)$-tuple of the numbers $\left(d_0^{\nu},d_1^c,\ldots,d_{\nu}^c\right)$ is called the \emph{distance profile} of the code \cite[pp. 112]{JZ}. The column distances are invariants of the code $\cC$ (\cite[Sec. 3.1]{JZ}) and satisfy
\begin{itemize}
\item[(1).] $d_0^c \le d_1^c \le d_2^c \le \ldots \le \lim_{j \to \infty} d_j^c=\D$.

\item[(2).] $d_j^c \le (n-k)(j+1)+1$ for every $j \in \ZZ$ (\cite[Proposition 2.2]{S});

\item[(3).] If $d_j^c = (n-k)(j+1)+1$ for some $j \in \ZZ$, then $d_i^c \le (n-k)(i+1)+1$ for all $i \le j$ (\cite[Corollary 2.3]{S});

\item[(4).] The generalized Singleton bound: $\D \le (n-k)\left(\lfloor \de/k \rfloor+1\right)+\de+1$ (\cite[Theorem 2.6]{S});

\item[(5).] Let $M:=\min\left\{j \in \ZZ: d_j^c=\D\right\}$, then $M \ge \lfloor \de/k \rfloor +\lceil \de/(n-k) \rceil$ (\cite[Proposition 2.7]{S}).
\end{itemize}

From Property (5), strongly-MDS convolutional codes are MDS convolutional codes such that the generalized Singleton bound is attained by the earliest column distance possible, that is (\cite[Definition 2.8]{S}):

\begin{defn} An $(n,k,\de)_q$ code with column distance $d_j^c, j \in \ZZ$, is called \emph{strongly-MDS}, if
\begin{eqnarray}  \label{2:smds}
d_M^c=(n-k) \left(\left\lfloor \de/k \right\rfloor+1\right)+\delta+1, \mbox{ for } M=\left\lfloor \de/k\right\rfloor+\left\lceil \de/(n-k)\right\rceil.  \end{eqnarray}
\end{defn}
The concept of ``maximal distance profile'' was defined in \cite{GL2} and is related to the notion of optimum distance profile (ODP), see \cite[pp. 112]{JZ}.

\begin{defn} An $(n,k,\de)_q$ code with column distance $d_j^c, j \in \ZZ$, is said to have a \emph{maximal distance profile}, if
\begin{eqnarray}  \label{2:mdp}
d_L^c=(n-k) \left(L+1\right)+1, \mbox{ where } L=\left\lfloor \de/k\right\rfloor+\left\lfloor \de/(n-k)\right\rfloor.  \end{eqnarray}
\end{defn}

It was proved that an $(n,k,\de)_q$ code $\cC$ has a maximum distance profile if and only if the dual code $\cC^{\bot}$ has this property \cite[Theorem 5.2]{GL2}.

Next we recall \cite[Theorem 3]{ALY} which provides a most powerful machinery to construct convolutional codes from linear block codes. For the sake of clarity and simplicity, we only quote the parts of \cite[Theorem 3]{ALY} which are most relevant to this paper. The statement of the theorem is written in slightly different form. Interested readers shall consult the original paper for details.
\begin{theorem} \label{2:con}
Let $\cC$ be an $[n,k,d]_q$ linear code with parity check matrix $H \in \F^{(n-k) \times n}$ where $1 \le k \le n-2$. Assume that $H$ is partitioned into submatrices $H_0,H_1,\ldots,H_m$ as $H^T=[H_0^T,H_1^T, \cdots,H_m^T]$ such that $\kappa=\rk H_0$ and $\rk H_i \le \kappa$ for $1 \le i \le m$ and $\kappa \ge \rk H_m >0$. Define the polynomial matrix $G(D)$ as
\[G(D)=\wh_0+\wh_1D+\cdots+\wh_m D^m \in \F[D]^{\kappa \times n},\]
where the matrices $\wh_i$ are obtained from $H_i$ by adding some zero-rows so that the $\wh_i$ all have $\kappa$ rows in total. Then
 \begin{itemize}
 \item[(a)] The matrix $G(D)$ is a $\kappa \times n$ minimal encoder.

 \item[(b)] Let $V$ be the convolutional code with $G(D)$ as the parity-check matrix, that is, $V=\left\{v \in \cF^n: G(D) v^T=0\right\}$. Let $\D$ be the free distance of $V$. Then
     \begin{eqnarray*} \min\left\{d_0+d_m,d\right\} \le \D \le d,\end{eqnarray*}
where $d_i$ is the minimum distance of the code $\cC_i:=\left\{v \in \F^n: H_i v^T=0\right\}$.
 \end{itemize}
\end{theorem}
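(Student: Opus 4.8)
The plan is to handle the two parts separately. Part (b), the free-distance bounds, is the real content of the statement and can be obtained by a direct coefficient-by-coefficient analysis of the defining equation $G(D)v^T=0$. Part (a), the assertion that $G(D)$ is a minimal encoder, is more technical, and I expect its \emph{basicness} to be the main obstacle.

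For the upper bound $\D \le d$ in (b) I would simply exhibit one low-weight codeword. Take a minimum-weight word $c \in \cC$, so $\wt(c)=d$ and $Hc^T=0$; since $H$ is the vertical stack of the $H_i$, this gives $H_i c^T = 0$ for every $i$, and hence $\wh_i c^T = 0$ (padding by zero-rows does not affect vanishing). Viewing $c$ as a constant element of $\cF^n$, we obtain $G(D)c^T = \sum_i \wh_i c^T D^i = 0$, so $c \in V \setminus \{0\}$ and therefore $\D \le \wt(c)=d$.

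For the lower bound $\D \ge \min\{d_0+d_m,d\}$, let $v\in V$ be a nonzero codeword of finite weight. By the shift-invariance of $V$ and of the weight I may assume $v=\sum_{j=0}^{t} v_j D^j$ with $v_0\neq 0$ and $v_t\neq 0$. Reading off the lowest coefficient ($D^0$) of $G(D)v^T=0$ gives $\wh_0 v_0^T=0$, i.e. $H_0 v_0^T=0$, so $v_0\in\cC_0$ and $\wt(v_0)\ge d_0$; reading off the highest coefficient ($D^{m+t}$) gives $\wh_m v_t^T=0$, i.e. $H_m v_t^T=0$, so $v_t\in\cC_m$ and $\wt(v_t)\ge d_m$. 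If $t\ge 1$, then $v_0$ and $v_t$ occur at distinct indices, whence $\wt(v)\ge \wt(v_0)+\wt(v_t)\ge d_0+d_m$. If $t=0$, then $v=v_0$ is constant and $G(D)v_0^T=0$ forces $\wh_i v_0^T=0$ for all $i$, i.e. $Hv_0^T=0$, so $v_0\in\cC$ and $\wt(v)\ge d$. In either case $\wt(v)\ge \min\{d_0+d_m,d\}$, which is the claimed bound.

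For part (a), I would verify the two defining properties of a minimal (minimal-basic) encoder in the sense of \cite{JZ}: that $G(D)$ is \emph{basic} and that it is \emph{row-reduced}. Delay-freeness and full rank over $\F(D)$ are immediate, since $G(0)=\wh_0$ has rank $\kappa$ by $\rk H_0=\kappa$. Row-reducedness would follow from checking that the high-order coefficient matrix $[G]_h$ has rank $\kappa$, and basicness from checking that $G(\lambda)$ has rank $\kappa$ for every $\lambda$ in the algebraic closure of $\F$ (equivalently, that $G(D)$ admits a polynomial right inverse). I expect basicness to be the main obstacle: unlike the distance bounds, it is not a formal consequence of $\rk\wh_0=\kappa$ alone, and I would have to exploit the full block structure — the nested rank conditions $\kappa=\rk H_0\ge\rk H_i$ and the exact padding convention — together with the fact that $H$ itself has full rank $n-k$. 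Once both properties hold, the row degrees of $G(D)$ sum to the external degree $\delta$, identifying $V$ as an $(n,n-\kappa,\delta)_q$ convolutional code and completing (a).
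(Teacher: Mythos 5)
First, a point of reference: the paper does not prove this theorem at all --- it is quoted verbatim (in simplified form) from \cite[Theorem 3]{ALY}, so there is no in-paper argument to compare yours against. Judged on its own terms, your proof of part (b) is complete and correct: the upper bound by embedding a minimum-weight word of $\cC$ as a constant codeword of $V$, and the lower bound by reducing to a finite-weight codeword $v=\sum_{j=0}^{t}v_jD^j$ with $v_0\neq 0$, $v_t\neq 0$ and reading off the extreme coefficients $\wh_0 v_0^T=0$ and $\wh_m v_t^T=0$, is exactly the natural argument, and your case split $t\ge 1$ versus $t=0$ correctly produces $\min\{d_0+d_m,d\}$. (The implicit reduction to finite-weight codewords is harmless, since infinite-weight codewords cannot violate a finite lower bound.)

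Part (a), however, is left as a plan rather than a proof: you correctly identify that one must verify that $G(D)$ is basic and row-reduced, but you explicitly defer the verification of basicness, calling it ``the main obstacle.'' That is the genuine gap. It is worth noting that the obstacle dissolves once you use the full row rank of $H$ rather than only $\rk \wh_0=\kappa$. Since $H$ has rank $n-k$, each block $H_i$ has linearly independent rows and in particular $H_0$ has exactly $\kappa$ rows, so $\wh_0=H_0$. Now if $x\,G(\lambda)=0$ for some row vector $x$ over $\overline{\F}$, expand $\sum_r x_r G(\lambda)_r$ as a linear combination of the rows of $H$: the nonzero rows of the various $\wh_i$ are pairwise distinct rows of $H$, and the coefficient attached to the $r$-th row of $H_0$ is exactly $x_r$; linear independence of the rows of $H$ forces $x=0$, so $G(\lambda)$ has rank $\kappa$ for every $\lambda\in\overline{\F}$ and $G$ is basic. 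The same observation handles row-reducedness: the high-order coefficient matrix $[G]_h$ consists of $\kappa$ distinct rows of $H$ (the leading block-row present in each row of $G$), hence has rank $\kappa$. With these two verifications supplied, your outline of (a) closes and the whole proof stands.
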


The following lemma which is a direct consequence of Theorem \ref{2:con}, provides guidance as to how to construct unit-memory MDS convolutional codes.
\begin{lemma} \label{2:lem3}
Let $\cC$ be an $[n,k,n-k+1]_q$ MDS linear code with parity check matrix $H \in \F^{(n-k) \times n}$ where $1 \le k \le n-2$. Assume that $H$ is partitioned into two submatrices $H_0,H_1$ as $H=\left[\begin{array}{c}H_0\\
H_1\end{array}\right]$ such that $\ga=\rk H_0 \ge \de =\rk H_1>0$. Define
\[G(D)=\wh_0+\wh_1D,\]
where $\wh_0=H_0$ and $\wh_1$ is obtained from $H_1$ by adding some zero-rows so that $\wh_i$ has $\ga$ rows in total. Then
 \begin{itemize}
 \item[(a)] The matrix $G(D)$ is a $\ga \times n$ minimal encoder.

 \item[(b)] The convolutional code $V$ obtained from $G(D)$ as a parity-check matrix has parameters $(n,k+\de,\de)_q$.

 \item[(c)] Let $\cC_i$ be the linear code with parity-check matrix $H_i, i=0,1$. If both $\cC_i$ are MDS codes, then $V$ is a unit-memory MDS convolutional code.
 \end{itemize}
\end{lemma}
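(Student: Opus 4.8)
The plan is to obtain all three parts from Theorem~\ref{2:con} specialized to $m=1$, after recording the structural fact that makes the parameters come out exactly. Since $\cC$ is an $[n,k]$ code, its parity-check matrix $H\in\F^{(n-k)\times n}$ has full rank $n-k$; hence the two stacked blocks $H_0,H_1$ are each of full row rank, with $\ga$ and $\de$ rows respectively and $\ga+\de=n-k$. In particular $\rk H_0=\ga$ is the number of rows of $H_0$ and $\rk H_1=\de$ the number of rows of $H_1$, so $\wh_1$ arises from $H_1$ by appending exactly $\ga-\de$ zero rows. For part~(a) I would then merely verify the hypotheses of Theorem~\ref{2:con}: with $m=1$ we have $\ka=\rk H_0=\ga$ and $\ka\ge\rk H_1=\de>0$ by assumption, so Theorem~\ref{2:con}(a) gives at once that $G(D)=H_0+\wh_1 D$ is a $\ga\times n$ minimal encoder; since $\wh_1\ne 0$ it has degree $1$, so the code is unit-memory.

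For part~(b), let $W$ be the convolutional code generated by the minimal encoder $G(D)$. Then $V=\{v:G(D)v^T=0\}=W^{\bot}$, so $V$ has length $n$ and dimension $n-\dim W=n-\ga=k+\de$. For the degree I would use two standard facts about convolutional codes: a code and its dual have the same degree, and the degree of a code equals the sum of the row degrees of any of its minimal encoders. Thus $\deg V=\deg W$ equals the sum of the row degrees of $G(D)$; exactly the $\de$ rows inherited from the nonzero rows of $\wh_1$ have degree $1$ while the remaining $\ga-\de$ rows have degree $0$, whence $\deg V=\de$. Therefore $V$ is an $(n,k+\de,\de)_q$ code.

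For part~(c), I would feed in the distance estimate of Theorem~\ref{2:con}(b), $\min\{d_0+d_1,d\}\le\D\le d$, where $d,d_0,d_1$ are the minimum distances of $\cC,\cC_0,\cC_1$. As $\cC$ is MDS, $d=n-k+1$. If $\cC_0$ and $\cC_1$ are MDS then, being $[n,n-\ga]$ and $[n,n-\de]$ codes, they have $d_0=\ga+1$ and $d_1=\de+1$, so $d_0+d_1=\ga+\de+2=n-k+2>d$. Hence $\min\{d_0+d_1,d\}=d$ and $\D=n-k+1$. Finally I would match this against the generalized Singleton bound \eqref{1:mds} for the parameters $(n,k+\de,\de)$: here $n-(k+\de)=\ga$ and $\lfloor\de/(k+\de)\rfloor=0$ because $k\ge 1$, so the bound equals $\ga+\de+1=n-k+1=\D$, and $V$ is MDS.

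All of these computations are short; the single step that genuinely requires care is the degree count in part~(b). It relies on the full row rank of $H_1$---which is what guarantees that $\wh_1$ contributes exactly $\de$ degree-one rows and not more---together with the two invariance facts (degree is self-dual, and the degree of a minimal encoder is the sum of its row degrees). Once these are in place, the remaining parameter bookkeeping and the distance estimate from Theorem~\ref{2:con} close the argument.
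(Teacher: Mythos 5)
Your argument is correct and follows essentially the same route as the paper: part (a) by invoking Theorem~\ref{2:con}(a), part (b) by passing to the dual of the code generated by the minimal encoder $G(D)$ and reading off the degree from the row degrees, and part (c) by combining the bound $\min\{d_0+d_1,d\}\le\D\le d$ from Theorem~\ref{2:con}(b) with the generalized Singleton bound for the parameters $(n,k+\de,\de)$. Your write-up is somewhat more explicit than the paper's (which compresses part (b) into a citation of a standard duality result), but there is no substantive difference.
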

\begin{proof}
Since $G(D)$ is a minimal encoder, the dual code of $V$ has parameters $(n,\ga,\de)_q$, hence $V$ has parameters $(n,n-\ga,\de)_q=(n,k+\de,\de)_q$ (\cite[Theorem 2.66]{JZ}). If both $\cC_i$ are MDS, then $d_0 \ge \ga+1, d_1 \ge \de+1$, hence $d_0+d_1 \ge n-k+1$ and we obtain $\D=n-k+1$ from Theorem \ref{2:con}. By the generalized Singleton bound (\ref{1:mds}), $V$ is an MDS convolutional code. It is easy to see that $V$ is of unit-memory.

\end{proof}

The next lemma, adopted from \cite[Theorem 28]{ROX}, indicates how to obtain new MDS convolutional codes from old ones.

\begin{lemma} \label{2:lem4} Let $\cC$ be an MDS $(n,k,\de)_q$ convolutional code with minimal encoder $G(D) \in \F[D]^{k \times n}$ and row indices $v=v_1=\ldots=v_l<v_{l+1}=\ldots=v_{k}=v+1$. Let $\ol{G}(D) \in \F[D]^{(k-1) \times n}$ be the matrix obtained from $G(D)$ by omitting any of the last $k-l$ rows of $G(D)$, and let $\ol{\cC}$ be the convolutional code generated by the encoder $\ol{G}(D)$. Then $\ol{\cC}$ is also an MDS convolutional code.
\end{lemma}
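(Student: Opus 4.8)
The plan is to exploit the containment $\ol{\cC}\subseteq\cC$ together with a parameter count. Since the rows of $\ol{G}(D)$ form a subset of the rows of $G(D)$, every element $\bar u\,\ol{G}(D)$ with $\bar u\in\cF^{k-1}$ is also of the form $u\,G(D)$ (place a zero coefficient on the omitted row), so $\ol{\cC}\subseteq\cC$. Denoting by $\D$ and $\ol{\D}$ the free distances of $\cC$ and $\ol{\cC}$, this immediately yields $\ol{\D}\ge\D$, because the minimum nonzero weight taken over the smaller code cannot be smaller than that taken over the larger one. The whole argument then reduces to showing that the generalized Singleton bound (\ref{1:mds}) for $\ol{\cC}$ equals $\D$: once this is known, $\ol{\D}\ge\D=(\text{Singleton bound of }\ol{\cC})\ge\ol{\D}$ forces equality throughout, so $\ol{\cC}$ attains its bound and is MDS.

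To run this I must first pin down the parameters of $\ol{\cC}$, and this is the step I expect to be the main obstacle. I would argue that $\ol{G}(D)$ is again a minimal encoder, so that $\ol{\cC}$ has dimension $k-1$, degree $\ol{\de}=\de-(v+1)$, and row indices consisting of $l$ copies of $v$ and $k-l-1$ copies of $v+1$. Minimality amounts to $\ol{G}(D)$ being row-reduced and basic. Row-reducedness is clear: the highest-row-degree coefficient matrix of $\ol{G}(D)$ is obtained by deleting one row from that of $G(D)$, which has full row rank $k$, so it still has full row rank $k-1$. For basicness I would use that $G(D)$, being a minimal encoder, is basic, hence $G(D_0)$ has rank $k$ for every $D_0\in\ol{\F}$; deleting one row then leaves rank $k-1$, so $\ol{G}(D_0)$ has full rank $k-1$ at every point, i.e. the $(k-1)\times(k-1)$ minors of $\ol{G}(D)$ have no common zero and their gcd is $1$. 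Thus $\ol{G}(D)$ is basic, hence minimal, and the external degree of $\ol{G}(D)$, namely $\de-(v+1)$, is the true degree $\ol{\de}$ of $\ol{\cC}$, with row indices equal to the row degrees of $\ol{G}(D)$.

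It remains to match the two Singleton bounds, which is routine bookkeeping once the parameters are fixed. Writing $\de=kv+(k-l)$ with $0\le k-l\le k-1$ gives $\lfloor\de/k\rfloor=v$, so the bound for $\cC$ is $S:=(n-k)(v+1)+\de+1$, attained by hypothesis. For $\ol{\cC}$ one has $\ol{\de}=\de-v-1=(k-1)v+(k-l-1)$ with $0\le k-l-1\le k-2<k-1$, whence $\lfloor\ol{\de}/(k-1)\rfloor=v$ as well, and the bound for $\ol{\cC}$ evaluates to $(n-k+1)(v+1)+(\de-v-1)+1=(n-k)(v+1)+\de+1=S$. Hence the two bounds coincide, and combined with $\ol{\D}\ge\D=S$ and $\ol{\D}\le S$ we conclude $\ol{\D}=S$, so $\ol{\cC}$ is MDS. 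I would also remark that the hypothesis that an omitted row has degree $v+1$ (rather than $v$) is precisely what keeps $\lfloor\ol{\de}/(k-1)\rfloor$ equal to $v$ and makes the two bounds agree; omitting a row of degree $v$ could change $\lfloor\ol{\de}/(k-1)\rfloor$ and break the match.
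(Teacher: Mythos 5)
The paper does not actually prove Lemma \ref{2:lem4}; it is imported from Smarandache's thesis \cite{ROX}, and your argument reconstructs exactly the standard proof of that result: the containment $\ol{\cC}\subseteq\cC$ gives $\ol{\D}\ge\D$, minimality of $\ol{G}(D)$ (row-reducedness and basicness both survive deletion of a row, since the highest-coefficient matrix and every evaluation $G(D_0)$ retain full rank) pins down the parameters $(n,k-1,\de-v-1)$, and the two generalized Singleton bounds coincide precisely because the deleted row has degree $v+1$. Your bookkeeping is correct --- in particular the hypotheses force $l\ge 1$ and $k-l\ge 1$, hence $k\ge 2$ and $0\le k-l-1<k-1$, which is what keeps $\lfloor\ol{\de}/(k-1)\rfloor=v$ --- so the proposal is a complete and correct proof along the same lines as the cited source.
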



\section{Unit-memory MDS $(n,k,\de)_q$ codes with $n \le q-1$} \label{sec3}

Let $q \geq 3$ be a prime power. For any given positive integers $n$ and $k$ such that $k <n  \le q-1$, define
\[f(x)=\prod_{i=0}^{n-1} \left(x-\theta^i\right), \quad g(x)=\prod_{i=0}^{n-k-1} \left(x-\theta^i\right),\]
where $\theta$ is a primitive element of $\F$. Let $\cC$ be an $f$-cyclic code with generator polynomial $g(x)$, that is, $\cC=(g(x)) \subset \F[x]/(f(x))$. Then each $v(x)=\sum_{i=0}^{n-1}v_i x^i \in \cC$ if and only if $g(x)|v(x)$, that is, \begin{eqnarray} \label{3:gx}v\left(\theta^j\right)=0, \quad \mbox{ for } \forall \, 0 \le j \le n-k-1.\end{eqnarray}
By Lemma \ref{2:lem1}, $\cC$ is an $[n,k,n-k+1]_q$ MDS linear code. Define the $(n-k) \times n$ matrix $H$ by
\[H^T=\left[\ul{h}_0^T,\ul{h}_1^T, \ldots,\ul{h}_{n-k-1}^T\right],\]
where each $\ul{h}_j$ is a $1 \times n$ row-vector with entries in $\F$ given by
\[\ul{h}_j=\left[1,\theta^{j},\theta^{2j}, \ldots,\theta^{(n-1)j}\right].\]
It is easy to see that the condition (\ref{3:gx}) is equivalent to $H v^T=0$, where $v=(v_0,v_1,\ldots,v_{n-1}) \in \F^n$. Therefore $H$ is a parity-check matrix of $\cC$.

Let $\ga,\de$ be any positive integers such that $\ga \ge \de$ and $\ga +\de=n-k$. Split the matrix $H$ into 2 disjoint submatrices $H_0$ and $H_1$ such that
$H=\begin{bmatrix}
 H_0 \\
 H_1
\end{bmatrix}$ where
\[H_0^T=\left[\ul{h}_0^T,\ul{h}_1^T,\ldots,\ul{h}_{\ga-1}^T\right], \quad H_1^T=\left[\ul{h}_{\ga}^T,\ul{h}_{\ga+1}^T, \ldots,\ul{h}_{\ga+\de-1}^T\right].\]

It shall be noted that any $\ga$ columns of $H_0$ (resp. any $\de$ columns of $H_1$) are linearly independent, so let $\cC_i$ be the linear code over $\F$ by the parity-check matrix $H_i, i=0,1$, then each $\cC_i$ is an MDS code.

Define two $\ga \times n$ matrices $\wh_0,\wh_1$ by
\begin{eqnarray*}
\wh_0=H_0, \quad \wh_1^T=\left[\mathbf{0}_{n \times (\ga-\de)},\ul{h}_{\ga}^T,\ul{h}_{\ga+1}^T, \ldots,\ul{h}_{\ga+\de-1}^T\right].
\end{eqnarray*}
Here $\mathbf{0}_{r \times s}$ denotes the $r \times s$ zero matrix for any positive integers $r$ and $s$. Let $V \in \cF^n$ be the convolutional code defined by the parity-check matrix $G(D)$ given by
\[g(D)=\wh_0+\wh_1D.\]
We prove the following result.

\begin{theorem} \label{3:thm1}
(i). The code $V$ defined above is an $(n,k+\de,\de)_q$ unit-memory MDS convolutional code whenever
\[1 \le k <n \le q-1, \quad 1 \le \de \le \frac{n-k}{2}\,.\]

(ii). The code $V$ is an $(n,k+\de,\de)_q$ unit-memory MDS convolutional code with a maximal distance profile whenever

\[1 \le k <n \le q-1, \quad 1 \le \de < \frac{n-k}{2}\,.\]

(iii). The code $V$ defined above is an $(n,k+\de,\de)_q$ unit-memory strongly-MDS convolutional code whenever
\[1 \le k <n \le q-1, \quad 1 \le \de \le \frac{n-k+1}{3}\,.\]
\end{theorem}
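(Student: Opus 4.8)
The plan is to read (i) and (ii) off the machinery already assembled, and to concentrate the actual work on the column-distance computation behind (iii). For (i), observe that $\de\le (n-k)/2$ is exactly the condition $\ga=n-k-\de\ge\de$, so the splitting of $H$ into $H_0$ and $H_1$ meets the rank hypothesis $\ga=\rk\wh_0\ge\de=\rk H_1>0$ of Lemma~\ref{2:lem3}; since the nodes $\theta^0,\dots,\theta^{n-1}$ are distinct (as $n\le q-1$), any $\ga$ columns of $H_0$ and any $\de$ columns of $H_1$ are Vandermonde, so $\cC_0$ and $\cC_1$ are MDS, and Lemma~\ref{2:lem3}(c) gives that $V$ is a unit-memory $(n,k+\de,\de)_q$ MDS convolutional code. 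For (ii), $V$ has dimension $K:=k+\de$ and redundancy $n-K=\ga$; since $k\ge1$ we have $\lfloor\de/K\rfloor=0$, and the strict bound $\de<(n-k)/2$ forces $\de<\ga$, so $\lfloor\de/\ga\rfloor=0$ and $L=0$ in~(\ref{2:mdp}). Hence the maximal-distance-profile condition collapses to $d_0^c=\ga+1$, which holds because $d_0^c$ is the minimum distance of $\ker\wh_0=\cC_0$, an MDS code of distance $\ga+1$.

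For (iii) I would first pin down the target: with $K=k+\de$, $n-K=\ga$ and $1\le\de\le\ga$ one gets $\lfloor\de/K\rfloor=0$ and $\lceil\de/\ga\rceil=1$, so $M=1$ in~(\ref{2:smds}) and being strongly-MDS is equivalent to $d_1^c=\ga+\de+1=n-k+1$. By Properties~(1) and~(4), $d_1^c\le\D\le n-k+1$, so the whole problem reduces to proving $d_1^c\ge n-k+1$. To this end I would expand the truncated relation $\left[\begin{smallmatrix}\wh_0&0\\ \wh_1&\wh_0\end{smallmatrix}\right](\hat v_0,\hat v_1)^{T}=0$, with $\hat v_0\ne 0$, into evaluation conditions on the polynomials $v_0,v_1\in\F[x]/(f)$: namely $v_0(\theta^j)=0$ for $0\le j\le\ga-1$, $v_1(\theta^j)=0$ for $0\le j\le\ga-\de-1$, and $v_1(\theta^j)=-v_0(\theta^{j+\de})$ for $\ga-\de\le j\le\ga-1$.

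The weight bound then divides into two cases. If $\hat v_1=0$, the coupling relation forces $v_0(\theta^j)=0$ for all $0\le j\le n-k-1$, so $v_0$ lies in the original MDS code $\cC$ and $\wt(\hat v)=\wt(\hat v_0)\ge n-k+1$. If $\hat v_1\ne 0$, then $v_0\ne 0$ lies in $\cC_0$ (distance $\ga+1$), while $v_1\ne 0$ satisfies $\ga-\de$ consecutive-root conditions and hence, by Lemma~\ref{2:lem1}, lies in an MDS code of distance $\ga-\de+1$; summing the two weights gives $\wt(\hat v)\ge(\ga+1)+(\ga-\de+1)=2\ga-\de+2$. The hypothesis $\de\le(n-k+1)/3$, i.e.\ $\ga\ge 2\de-1$, is exactly what yields $2\ga-\de+2\ge\ga+\de+1=n-k+1$, completing the lower bound.

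I expect this second case to be the crux. The key realization is that, although $v_1$ is coupled to $v_0$ on the top $\de$ evaluation indices through $v_1(\theta^j)=-v_0(\theta^{j+\de})$, it still inherits $\ga-\de$ \emph{unconditional} zeros $\theta^0,\dots,\theta^{\ga-\de-1}$ from the zero block at the top of $\wh_1$, so it sits by itself in a smaller MDS code; spotting this decoupling, together with checking that the threshold $\de\le(n-k+1)/3$ is precisely the arithmetic reconciling $2\ga-\de+2$ with $\ga+\de+1$, is the heart of the argument.
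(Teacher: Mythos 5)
Your proposal is correct and follows essentially the same route as the paper: (i) via Lemma \ref{2:lem3} with the MDS subcodes $\cC_0,\cC_1$, (ii) via $L=0$, and (iii) by reducing to $d_1^c\ge n-k+1$ and, in the case $\hat v_1\ne 0$, extracting the $\ga-\de$ unconditional zeros of $v_1$ coming from the zero block of $\wh_1$ to get $\wt(v_1)\ge\ga-\de+1\ge\de$. The arithmetic $2\ga-\de+2\ge\ga+\de+1\Leftrightarrow\ga\ge2\de-1\Leftrightarrow\de\le(n-k+1)/3$ is exactly the paper's threshold.
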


\begin{proof}
(i) follows directly from Lemma \ref{2:lem3}, as $\cC_0,\cC_1$ are both MDS linear codes.

(ii) is obvious since $L=\lfloor \de/(k+\de)\rfloor+\lfloor \de/(n-k-\de)\rfloor=0$.

As for (iii), since $M=\lfloor \de/(k+\de)\rfloor+\lceil \de/(n-k-\de)\rceil=1$ and $d_1^c \le \D=d=n-k+1$, it suffices to prove that
\begin{eqnarray}\label{2:dineq} d_1^c \ge d=n-k+1,\end{eqnarray}
where $d_1^c$ is the $1$-st column distance of $\cC$ which is given by
\begin{eqnarray*}d_1^c
&=& \min\left\{\wt\left(\hat{v}\right): \hat{v}=\left(\hat{v}_0,\hat{v}_1\right)\in \F^{2n}: H_1^c \left(\hat{v}\right)^T=0, \hat{v}_0 \ne 0\right\},
\end{eqnarray*}
and the matrix $H_1^c$ is given by
\[H_1^c:=\left[\begin{array}{cc}
H_0&0\\
\wh_1&H_0\end{array}\right].\]
For $v=(v_0,v_1)$ where $v_0,v_1 \in \F^n$ and $v_0 \ne 0$ such that $H_1^c v^T=0$, that is, $H_0 v_0^T=0$ and $\wh_1 v_0^T+H_0v_1^T=0$. If $v_1=0$, then $H v_0^T=0$, thus $0 \ne v_0 \in \cC$ and $\wt(v_0) \ge d=n-k+1$. If $v_1 \ne 0$, we have $v_0 \in \cC_0$ and $\wt(v_0) \ge d_0=\ga+1$. Noting that $\ga-\de \ge \de-1$, the equation $\wh_1 v_0^T+H_0v_1^T=0$ implies that $\Ga v_1^T=0,v_1 \ne 0$ where
\[\Ga^T=\left[\ul{h}_0^T,\ul{h}_1^T,\ldots,
\ul{h}_{\ga-\de-1}^T\right].\]
It is easy to see that any $\ga-\de$ columns of $\Ga$ are linearly independent. This implies that $\wt(v_1) \ge \ga-\de+1 \ge \de$, hence we have $\wt(v)=\wt(v_0)+\wt(v_1) \ge \ga+1+\de=n-k+1$. This confirms (\ref{2:dineq}) and completes the proof of (iii).
\end{proof}
We remark that (i) of Theorem \ref{3:thm1} can be proved by by using the generalized Reed-Solomon ($\grs$) code as well.


\section{Unit-memory MDS $(q,k,\de)_q$ codes} \label{sec4}

Again let $q \geq 3$ be a prime power. For $n=q$ we find it is most convenient to use the generalized Reed-Solomon ($\grs$) codes of length $q$ to construct unit-memory $(q,k,\de)_q$ codes, though it is quite possible that $f$-cyclic codes may be used for this purpose as well.

Let $\theta$ be a primitive element of $\F$. Let $\zv =\left(0,\theta,\ldots,\theta^{q-1}\right)$ and $\zw=\left(w_0,w_1,\ldots,w_{q-1}\right)$ where $w_i \in \F^*$ for each $i$. For any positive integer $k<q$, the $\grs$ code $\cC=\grs_{k}(\zv,\zw)$ is defined by
\[\grs_{k}(\zv,\zw)=\Big\{\left(w_0f(0),w_1f\left(\theta\right),\ldots,w_{q-1} f\left(\theta^{q-1}\right)\right): f \in \pp_k\Big\},\]
where $\pp_k \subset \F[x]$ is the set of polynomials of degree less than $k$. It is known that $\cC$ is an $[q,k,q-k+1]_q$ MDS linear code and a parity check matrix of $\cC$ is given by
\[H=\left[\begin{array}{ccccc}
u_0&u_1&u_2&\cdots&u_{q-1}\\
0&u_1 \theta &u_2\theta^2& \cdots &u_{q-1}\theta^{q-1}\\
\vdots & \vdots& \vdots& \cdots &\vdots\\
0&u_1 \theta^{q-k-1} &u_2\theta^{2(q-k-1)}& \cdots &u_{q-1}\theta^{(q-1)(q-k-1)}\end{array}\right]_{(q-k) \times q},\]
where $u_i \in \F^*$ can be determined by $\cC$ (see \cite[Theorem 5.3.3]{HV}). Let $\ul{h}_j$ be the $j+1$-th row of $H$. For any positive integers $\ga,\de$ such that $\ga \ge \de$ and $\ga +\de=q-k$, we split the matrix $H$ into 2 disjoint submatrices $H_0$ and $H_1$ such that
$H=\begin{bmatrix}
 H_0 \\
 H_1
\end{bmatrix}$ where
\[H_0^T=\left[\ul{h}_0^T,\ul{h}_1^T,\ldots,\ul{h}_{\ga-1}^T\right], \quad H_1^T=\left[\ul{h}_{\ga+\de-1}^T,\ul{h}_{\ga+\de-2}^T, \ldots,\ul{h}_{\ga}^T\right].\]
Define two $\ga \times q$ matrices $\wh_0,\wh_1$ by
\begin{eqnarray*}
\wh_0=H_0, \quad \wh_1^T=\left[\mathbf{0}_{q \times (\ga-\de)},\ul{h}_{\ga+\de-1}^T,\ul{h}_{\ga+\de-2}^T, \ldots,\ul{h}_{\ga}^T\right].
\end{eqnarray*}
Let $V \in \cF^n$ be the convolutional code defined by the parity-check matrix $G(D)$ given by
\[g(D)=\wh_0+\wh_1D.\]
We prove the following result.

\begin{theorem} \label{4:thm2}
(i). The code $V$ defined above is an $(q,k+\de,\de)_q$ unit-memory MDS convolutional code whenever
\[1 \le k < q, \quad 1 \le \de \le \frac{q-k}{2}\,.\]

(ii). The code $V$ is an $(q,k+\de,\de)_q$ unit-memory MDS convolutional code with a maximal distance profile whenever

\[1 \le k < q, \quad 1 \le \de < \frac{q-k}{2}\,.\]

(iii). The code $V$ defined above is an $(q,k+\de,\de)_q$ unit-memory strongly-MDS convolutional code whenever
\[1 \le k <q, \quad 1 \le \de \le \frac{q-k+1}{3}\,.\]
\end{theorem}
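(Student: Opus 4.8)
The plan is to obtain (ii) and (iii) by the same arguments as in Theorem~\ref{3:thm1}, and to isolate part (i), which is the only essentially new point. The decisive feature of the length-$q$ construction is that one evaluation point equals $0$: because $0^{\,j}=0$ for $j\ge 1$, the first column of every $\ul{h}_j$ with $j\ge1$ vanishes, so the first column of $H_1$, and hence of $\wh_1$, is zero. Therefore $\cC_1$ contains the weight-one word supported on coordinate $0$ and is \emph{not} MDS, so Lemma~\ref{2:lem3}(c) is unavailable and part (i) cannot be read off from Lemma~\ref{2:lem3} as in Theorem~\ref{3:thm1}. By contrast $\cC_0$ has parity checks $\ul{h}_0,\ldots,\ul{h}_{\ga-1}$; since the row $\ul{h}_0$ is present its first column is $(u_0,0,\ldots,0)^T\ne 0$, the matrix is a genuine Vandermonde, and $\cC_0$ is MDS.

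Granting this, (ii) is immediate: $L=\lfloor\de/(k+\de)\rfloor+\lfloor\de/\ga\rfloor=0$, so the maximal-distance-profile condition is just $d_0^c=\ga+1$, which holds because $\cC_0$ is MDS. For (iii) I would run the $d_1^c$-computation of Theorem~\ref{3:thm1}(iii) verbatim. For $(v_0,v_1)$ with $v_0\ne 0$ and $H_1^c(v_0,v_1)^T=0$ one has $\wh_0v_0^T=0$ and $\wh_1v_0^T+\wh_0v_1^T=0$; the top $\ga-\de$ rows of $\wh_1$ are zero (this is unaffected by the reversal of the rows of $H_1$), forcing $\ul{h}_r v_1^T=0$ for $r=0,\ldots,\ga-\de-1$. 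These are \emph{consecutive} checks \emph{including} $\ul{h}_0$, hence an MDS Vandermonde, so $\wt(v_1)\ge\ga-\de+1\ge\de$ by the hypothesis $\ga\ge 2\de-1$, while $\wt(v_0)\ge\ga+1$. Thus $d_1^c\ge\ga+\de+1=q-k+1$ and the code is strongly-MDS.

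For part (i) the plan is to \emph{peel off coordinate $0$} and reduce to the length-$(q-1)$ result already proved in Theorem~\ref{3:thm1}. Write $v_i=(v_{i,0},v_i')$ with $v_i'\in\F^{q-1}$ indexed by the nonzero points; the checks $\ul{h}_j$ with $j\ge 1$ ignore coordinate $0$, and their restrictions $\ul{h}_j'=[u_1\theta^{j},\ldots,u_{q-1}\theta^{(q-1)j}]$ are a $\grs$ parity check of length $q-1$. Assume first $\ga>\de$; then the top row of $\wh_1$ is zero, the row-$0$ equation gives $\ul{h}_0 v_i^T=0$ for every $i$, and this determines $v_{i,0}$ from $v_i'$. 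The projected word $v'(D)$ then satisfies the checks $1,\ldots,\ga+\de-1$, i.e. it is a codeword of the length-$(q-1)$ construction of Theorem~\ref{3:thm1} with parameters $(\ga-1,\de)$; by Theorem~\ref{3:thm1}(i) (applicable since $\de\le\ga-1$) a nonzero $v'(D)$ has weight $\ge(q-1)-k+1=q-k$. I then argue by dichotomy. If $v_{i,0}\ne 0$ for some $i$, the total weight is $\ge q-k+1$ at once. Otherwise $\ul{h}_0'(v_i')^T=0$ for all $i$, so $v'(D)$ also satisfies the check $\ul{h}_0$ on every block and therefore lies in the length-$(q-1)$ code with parameters $(\ga,\de)$ and block dimension $k-1$, whose free distance is $(q-1)-(k-1)+1=q-k+1$ by Theorem~\ref{3:thm1}(i) (using the full hypothesis $\de\le(q-k)/2$). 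In either case $\D\ge q-k+1$, and with the generalized Singleton bound (\ref{1:mds}) this yields the MDS property.

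The hard part will be the extremal case $\ga=\de$, i.e. $\de=(q-k)/2$, which for $\de\ge 2$ falls outside the range of (iii). Here the top row of $\wh_1$ is the \emph{nonzero} row $\ul{h}_{2\de-1}$, so coordinate $0$ no longer decouples---the row-$0$ equation is $\ul{h}_0 v_i^T+\ul{h}_{2\de-1}v_{i-1}^T=0$---and the peeled length-$(q-1)$ code has parameters $(\de-1,\de)$ with $\ga'=\de-1<\de$, outside the scope of Theorem~\ref{3:thm1}. I expect to close this case by exploiting the palindromic pairing $s\leftrightarrow 2\de-1-s$ that the reversal of the rows of $H_1$ builds into $\wh_1$: writing $b_i=(a_{i,0},\ldots,a_{i,\de-1})$ and $c_i=(a_{i,\de},\ldots,a_{i,2\de-1})$ for the block syndromes $a_{i,j}=\ul{h}_j v_i^T$, the code equations collapse to $b_0=0$, $c_L=0$ and the self-dual transfer relation $b_i=-\mathrm{rev}(c_{i-1})$. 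Turning this syndrome telescoping into a weight bound $\ge q-k+1$---presumably via the coset/Vandermonde structure of the MDS block code $\cC$, combining the first block (which meets checks $0,\ldots,\de-1$) with the last block (which meets checks $\de,\ldots,2\de-1$)---is the step I anticipate to be the main obstacle.
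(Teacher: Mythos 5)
Your treatment of (ii) and (iii) matches the paper's (the paper simply says these arguments are identical to those of Theorem~\ref{3:thm1}), and your diagnosis of why Lemma~\ref{2:lem3}(c) fails here is exactly right: the evaluation point $0$ kills the first column of every $\ul{h}_j$ with $j\ge 1$, so $\cC_1$ contains a weight-one word. Your reduction for (i) in the case $\ga>\de$ --- peeling off coordinate $0$ and invoking Theorem~\ref{3:thm1} for the two projected length-$(q-1)$ codes, with the dichotomy on whether some $v_{i,0}\ne 0$ --- is a genuinely different route from the paper's and is essentially sound (two small points you should still patch: when $k=1$ the "otherwise" branch lands on a block dimension $k-1=0$, which needs the degenerate extension of Lemma~\ref{2:lem3} to nonsingular square matrices, i.e.\ Lemma~\ref{2:lem31}; and the projected code differs cosmetically from the Theorem~\ref{3:thm1} construction in that its checks start at exponent $1$ and the rows of $H_1$ are reversed, so one should say explicitly that only the MDS-ness of the three block codes and $\ga'\ge\de'$ are used).

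However, part (i) is not proved: the extremal case $\ga=\de$, i.e.\ $\de=(q-k)/2$, which you correctly identify as the hard case and explicitly leave as "the main obstacle," is squarely inside the claimed range of (i) and is precisely where the paper does its real work. The paper argues directly that $d_j^c\ge q-k+1$ for large $j$ by a case analysis on the last nonzero block $v_i$ of a truncated codeword: the only delicate situation is $\wt(v_i)=1$, which forces $v_i$ to be supported on the coordinate of the evaluation point $0$. For $\ga>\de$ this immediately contradicts the zero top row of $\wh_1$ against the nonzero entry $u_0$ of $\ul{h}_0$ (your observation). For $\ga=\de$ one steps back one block: if $i\ge 2$ then $v_{i-1}$ satisfies the $\de-1$ checks $\ul{h}_{\de},\dots,\ul{h}_{2\de-2}$, so either $\wt(v_{i-1})\ge\de$ (done) or $v_{i-1}$ is also supported on coordinate $0$, which contradicts the transfer equation; and in the boundary case $i=1$ the two syndrome systems on $v_0$ concatenate into the single \emph{consecutive} run $\ul{h}_0,\dots,\ul{h}_{2\de-2}$ --- this is exactly the point of reversing the rows of $H_1$ in the Section~\ref{sec4} construction, since it places the one corrupted syndrome at the top index $2\de-1$ --- forcing $\wt(v_0)\ge 2\de$ and hence total weight $\ge 2\de+1=q-k+1$. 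This is the "palindromic pairing" computation you gesture at but do not carry out; without it the statement is only established for $\de<(q-k)/2$.
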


\begin{proof}
It is clear that $V$ is an $(q,k+\de,\de)_q$ code. The arguments of (ii) and (iii) are exactly the same as those of (ii) and (iii) of Theorem \ref{3:thm1} respectively, hence we omit details. We only prove (i), the proof of which is quite different from that of (i) in Theorem \ref{3:thm1}.

To prove (i), it suffices to prove that there exists a sufficiently large integer $j$ such that the $j$-th column distance $d_j^c$ of $\cC$ satisfies $d^c_j \ge n-k+1=\ga+\de+1$, since $\D \ge d_j^c$ for any $j$. Here
\begin{eqnarray*}d_j^c&=&\min\left\{\wt\left(\hat{v}\right): \hat{v}=\left(\hat{v}_0,\ldots,\hat{v}_j\right)\in \F^{(j+1)n}: H_j^c \left(\hat{v}\right)^T=0, \hat{v}_0 \ne 0\right\},
\end{eqnarray*}
where
\[H_j^c=\left[\begin{array}{ccccc}
H_0&&&&\\
\wh_1&H_0 && &\\
& \wh_1 & H_0 &&\\
&& \ddots & \ddots &\\
&&&\wh_1&H_0\end{array}\right].\]
For any $v_i \in \F^n, 0 \le i \le j$ and $v_0 \ne 0$, for the sake of convenience we write each $v_i$ as a column vector. Let $v^T=(v_0^T,\ldots,v_j^T)$ and suppose $H_j^c v=0$. This is equivalent to $H_0v_0=0$ and $\wh_1v_{i-1}+H_0v_i=0$ for each $1 \le i \le j$. Let $\cC_0$ be the linear code defined by the parity-check matrix $H_0$. Obviously $\cC_0$ a $[q,q-\ga,\ga+1]_q$ MDS code and $0 \ne v_0 \in \cC_1$, hence $\wt(v_0) \ge \ga+1$. Our goal is to prove $\wt(v) \ge \ga+\de+1$ when $j$ is sufficiently large. The argument is dividend into several cases, which we will describe as below.

\noindent {\bf Case 1.} If $\# \{1 \le i \le j: v_i \ne 0\} \ge \de$, then $\wt(v)=\sum_i \wt(v_i) \ge \ga+\de+1$, done.

\noindent {\bf Case 2.} If $\# \{1 \le i \le j: v_i \ne 0\} =0$, that is, $v_1=\ldots=v_j=0$, then $Hv_0=0$, hence $v_0 \in \cC$ and $\wt(v_0) \ge d=\gamma+\de+1$, done.


\noindent {\bf Case 3.} If there is some $i$ such that $v_i=0$ but $v_{i+1} \ne 0$, then from $\wh_1v_i+H_0v_{i+1}=H_0v_{i+1}=0$ we have $0 \ne v_{i+1} \in \cC_0$ with $\wt(v_{i+1}) \ge \ga+1$, thus $\wt(v) \ge \wt(v_0)+\wt(v_{i+1}) \ge 2 \ga+2 \ge \ga+\de+1$, done.

It remains to consider the case that there is an integer $i$ in the range $1 \le i< \de$ such that $v_t \ne 0$ for any $t \le i$ and $v_t=0$ for any $t >i$. The equations are $\wh_1v_i+H_0v_{i+1}=\wh_1v_i=0$ and $\wh_1v_{i-1}+H_0v_i=0$.

\noindent {\bf Case 4.} If $\wt(v_i) \ge 2$, from $H_1v_i=0$ and the definition of $H_1$ we see that $\wt(v_i) \ge \de+1$, done.

\noindent {\bf Case 5.} Now suppose $\wt(v_i)=1$. Then $v_i$ must be of the form $v_i^T=(\al,0,\ldots,0)$ for some $\al \in \F^*$. If $\ga > \de$, then it is impossible to have $\wh_1v_{i-1}+H_0v_i=0$. So we shall assume that $\ga=\de$. Thus $\wh_1=H_1$. We may assume that $\ga=\de \ge 2$, because if otherwise then $\wt(v) \ge \wt(v_0)+\wt(v_i) \ge \ga+\de+1$ is trivial.

\noindent {\bf Case 5 i).} If $i \ge 2$, from $H_1v_{i-1}+H_0v_i=H_1v_{i-1}+[\al,0,\ldots,0]^T=0$ and the definition of $H_1$ we obtain $\Ga v_{i-1}=0$ where $\Ga^T=\left[\ul{h}_{\ga+\de-2}^T,\ul{h}_{\ga+\de-3}^T, \ldots,\ul{h}_{\ga}^T\right]$. If $\wt(v_{i-1}) \ge 2$, then $\wt(v_{i-1}) \ge \de$, thus $\wt(v) \ge \wt(v_0)+\wt(v_{i-1})+\wt(v_i) \ge \ga+\de+1$, done. If $\wt(v_{i-1})=1$, then $v_{i-1}$ must be of the form $v_{i-1}^T=(\beta,0,\ldots,0)$ for some $\beta \in \F^*$. However this contradicts to the equation $H_1v_{i-1}+H_0v_i=0$.

\noindent {\bf Case 5 ii).} Finally we consider that case that $i=1$, that is, $v_0 \ne 0, v_1 \ne 0, v_t =0$ for any $t \ge 2$. The equations $H_0v_0=0$ and $\wh_1v_{0}+H_0v_1=0$ where $v_1^T=(\alpha,0,\ldots,0), \alpha \in \F^*$ implies that $\widetilde{\Ga} \hat{v}_0=0$ where $\widetilde{\Ga}=\left[\ul{h}_0,\ldots,\ul{h}_{\ga+\de-2}\right]$, where $\hat{v}_0$ is a rearrangement of entries of $v_0$ and hence they have the same Hamming weight. Thus $\wt(v_0) \ge \ga+\de$ and therefore $\wt(v)=\wt(v_0)+\wt(v_1) \ge \ga+\de+1$, done.

We conclude that in all cases we have $\wt(v) \ge \ga+\de+1$. This proves $d_j^c \ge \ga+\de+1$ for some $j$, and the proof of (i) is now complete.

\end{proof}

\section{Unit-memory MDS $(q+1,k,\de)_q$ codes} \label{sec5}

Let $\theta$ be a primitive element of $\Fqq$ and $\beta:=\theta^{q-1}$. Then $\beta$ is a primitive $(q+1)$-th root of unity in $\Fqq$ and $\beta^q=\beta^{-1}$.

Let $k$ be any positive integer $k$ such that $1 \le k <q+1$. We first consider the case that $k \equiv q \pmod{2}$.

\subsection{Part 1. $k \equiv q \pmod{2}$} Define $\tau:=\frac{q-k}{2}$ and
\[g(x) =\prod_{j=-\tau}^{\tau} \left(x-\beta^j\right).\]
It is clear that $\deg g=2 \tau+1$, $g(x) \in \F[x]$ and $g(x)|\left(x^{q+1}-1\right)$. Let $\cC =(g(x)) \subset \F[x]/\left(x^{q+1}-1\right)$ be the cyclic code of length $q+1$ with generator polynomial $g(x)$. It is known from Lemma \ref{2:lem2} that $\cC$ is a $[q+1,k,2 \tau+2]_q$ MDS code.

Each $v(x)=\sum_{i=0}^{q}v_i x^i \in \cC$ if and only if $g(x)|v(x)$, that is,
\begin{eqnarray} \label{5:gx}v\left(\beta^j\right)=0, \quad \mbox{ for } \forall \, -\tau \le j \le \tau.\end{eqnarray}
Because $v_i \in \F$ and $\beta^q=\beta^{-1}$, this is equivalent to
\begin{eqnarray} \label{5:gx2}v\left(\beta^j\right)=0, \quad \mbox{ for } \forall \, 0 \le j \le \tau.\end{eqnarray}
Denote for each $j$
\[\ul{h}_j=\left[1,\beta^{j},\beta^{2j}, \ldots,\beta^{qj}\right],\]
and define
\[\ol{H}^T=\left[\ul{h}_{-\tau}^T, \ldots,\ul{h}_{-1},\ul{h}_0^T,\ul{h}_1^T,\ldots,\ul{h}_{\tau}^T\right],\]
\begin{eqnarray} \label{5:h2} H^T=\left[\ul{h}_0^T,\ul{h}_1^T,\ldots,
\ul{h}_{\tau}^T\right].\end{eqnarray}
(\ref{5:gx}) and (\ref{5:gx2}) imply that for any $v=(v_0,\ldots,v_q) \in \cC \subset \F^{q+1}$ if and only if $\ol{H}v^T=0$ if and only if $Hv^T=0$.

Each vector $\ul{h}_j (1 \le j \le \tau)$ which is originally defined over $\Fqq$ can be identified with two rows of length $q+1$ defined over $\F$. More precisely, let $\{1,e\}$ be a basis of $\Fqq$ over $\F$, then $\ul{h}_j=\ul{h}_{j,1}+e \ul{h}_{j,2}$ where $\ul{h}_{j,i}, i=1,2$ are defined over $\F$. We identify $\ul{h}_j$ with $\left[\begin{array}{c}
\ul{h}_{j,1}\\
\ul{h}_{j,2}\end{array}\right]$, which is a $2 \times (q+1)$ matrix over $\F$. Under this identification $H$ becomes a $(2\tau+1) \times (q+1)$ matrix over $\F$, and $v=(v_0,\ldots,v_q) \in \cC$ if and only if $Hv^T=0$, hence $H$ can be regarded as a parity-check matrix of $\cC$. In what follows each $\ul{h}_j$, $j \ge 1$ shall be thought of as either one row over $\Fqq$ or equivalently two rows over $\F$.

To construct convolutional codes from $H$, we split the matrix $H$ into 2 disjoint submatrices $H_0$ and $H_1$ such that
$H=\begin{bmatrix}
 H_0 \\
 H_1
\end{bmatrix}$. There are two different ways to do that.

\noindent {\bf Construction one.} Let $q \geq 5$. For any positive integers $\ga,\de$ such that $\ga > \de$ and $\ga +\de=\tau+1$, define $H_0,H_1$ as
\[H_0^T=\left[\ul{h}_0^T,\ul{h}_1^T,\ldots,\ul{h}_{\ga-1}^T\right], \quad H_1^T=\left[\ul{h}_{\ga}^T,\ul{h}_{\ga+1}^T, \ldots,\ul{h}_{\ga+\de-1}^T\right].\]
Considered as matrices over $\F$, $H_0,H_1$ are of size $(2 \ga-1) \times (q+1)$ and $2 \de \times (q+1)$ respectively. It shall be noted that any $(2 \ga-1)$ columns of $H_0$ are linearly independent, because for any $v=(v_1,\ldots,v_q) \in \F^{q+1}$, $H_0v^T=0$ if and only if $\ol{H}_0v^T=0$ where
\[\ol{H}_0^T=\left[\ul{h}_{-(\ga-1)}^T,\ldots,\ul{h}_{-1}^T,\ul{h}_0^T,
\ul{h}_1^T,\ldots,\ul{h}_{\ga-1}^T\right].\]

Define two $(2\ga-1) \times (q+1)$ matrices $\wh_0,\wh_1$ by
\begin{eqnarray*}
\wh_0=H_0, \quad \wh_1^T=\left[\mathbf{0}_{(q+1) \times (2\ga-1-2\de)},\ul{h}_{\ga}^T,\ul{h}_{\ga+1}^T, \ldots,\ul{h}_{\ga+\de-1}^T\right].
\end{eqnarray*}
Let $V \in \cF^{q+1}$ be the convolutional code defined by the parity-check matrix $G(D)$ given by
\[g(D)=\wh_0+\wh_1D.\]
We prove the following result.

\begin{theorem} \label{5:thm3}
The code $V$ defined above is a $(q+1,k+2\de,2\de)_q$ unit-memory strongly-MDS convolutional code whenever
\[k \equiv q \pmod{2}, \quad 1 \le k <q, \quad 1 \le \de \le \frac{q-k+2}{6}\,.\]
In this case $V$ also has a maximal distance profile.
\end{theorem}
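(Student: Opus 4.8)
The plan is to recognize this construction as an instance of Lemma \ref{2:lem3} applied over $\F$, and then to verify the strongly-MDS and maximal-distance-profile conditions by a direct computation of the zeroth and first column distances, exactly mirroring the argument of Theorem \ref{3:thm1}(iii). First I would pin down the parameters. Viewed over $\F$, the block $H_0$ has rank $2\ga-1$ (the row $\ul{h}_0$ contributes one $\F$-row, while each $\ul{h}_j$ with $j\ge 1$ contributes two), and $H_1$ has rank $2\de$; since $\ga>\de$ we have $2\ga-1\ge 2\de$, so Lemma \ref{2:lem3}(a),(b) apply and give that $G(D)$ is a $(2\ga-1)\times(q+1)$ minimal encoder and that $V$ is a unit-memory $(q+1,\,q+2-2\ga,\,2\de)_q$ code. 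Using $\ga+\de=\tau+1$ and $\tau=(q-k)/2$ one checks $q+2-2\ga=k+2\de$, as claimed.

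The core ingredient is an independence statement for the relevant block submatrices, obtained by symmetrization. Using $\beta^q=\beta^{-1}$, the condition $H_0v^T=0$ is equivalent to $\ol{H}_0 v^T=0$ over $\Fqq$, where $\ol{H}_0$ evaluates at the $2\ga-1$ consecutive powers $\beta^{-(\ga-1)},\dots,\beta^{\ga-1}$; the resulting $(2\ga-1)\times(q+1)$ matrix has every $(2\ga-1)$ columns forming a nonsingular Vandermonde-type matrix (valid since $2\ga-1\le q+1$ and $\beta$ is a primitive $(q+1)$-th root of unity), so $\cC_0$ is a $[q+1,\,q+2-2\ga,\,2\ga]_q$ MDS code. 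Since $L=\lfloor 2\de/(k+2\de)\rfloor+\lfloor 2\de/(2\ga-1)\rfloor=0$ and $d_0^c$ equals the minimum distance $2\ga=(2\ga-1)+1$ of $\cC_0$, the maximal-distance-profile condition is immediate.

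For strongly-MDS, note $M=\lfloor 2\de/(k+2\de)\rfloor+\lceil 2\de/(2\ga-1)\rceil=1$, so it suffices to show $d_1^c\ge 2\ga+2\de$, which forces $d_1^c=\D=2\tau+2$, the generalized Singleton bound. I would take $v_0\ne 0$ and $v_1$ with $H_0v_0^T=0$ and $\wh_1v_0^T+H_0v_1^T=0$, and split into two cases. If $v_1=0$, then $Hv_0^T=0$, so $v_0\in\cC$ and $\wt(v_0)\ge 2\tau+2=2\ga+2\de$. If $v_1\ne 0$, then $v_0\in\cC_0$ gives $\wt(v_0)\ge 2\ga$, while the zero-padding in the top $2\ga-1-2\de$ rows of $\wh_1$ forces $\Ga v_1^T=0$, where $\Ga$ has rows $\ul{h}_0,\dots,\ul{h}_{\ga-1-\de}$; symmetrizing as above shows any $2\ga-2\de-1$ of its columns are independent, so $\wt(v_1)\ge 2\ga-2\de$.

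The decisive point, and the main thing to get right, is the numerics: via $\ga+\de=\tfrac{q-k+2}{2}$ the hypothesis $\de\le\tfrac{q-k+2}{6}$ is \emph{exactly} equivalent to $\ga\ge 2\de$, which yields $2\ga-2\de\ge 2\de$ and hence $\wt(v)=\wt(v_0)+\wt(v_1)\ge 2\ga+2\de$ in the second case as well. The main obstacle is therefore not any single hard estimate but the careful bookkeeping of ranks and column counts over $\F$ versus $\Fqq$: correctly tracking that $\ul{h}_0$ is a single $\F$-row, that the symmetrization produces blocks of \emph{consecutive} exponents so that Vandermonde nonsingularity applies, and that all block lengths stay within the range $\le q+1$ where these independence statements remain valid.
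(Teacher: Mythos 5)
Your proposal is correct and follows essentially the same route as the paper's own proof: the same identification of the parameters via Lemma \ref{2:lem3}, the same symmetrization trick (replacing $H_0$ and $\Gamma$ by their conjugate-completed versions $\ol{H}_0$, $\ol{\Gamma}$ with consecutive exponents to get Vandermonde independence), the same two-case analysis on $v_1=0$ versus $v_1\ne 0$, and the same reduction of the hypothesis $\de\le (q-k+2)/6$ to $\ga\ge 2\de$ so that $\wt(v_1)\ge 2(\ga-\de)\ge 2\de$. The only addition is your explicit verification of the maximal-distance-profile claim via $L=0$ and $d_0^c=2\ga$, which the paper leaves implicit.
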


\begin{proof}
First it is clear that $V$ is an $(q+1,k+2\de,2\de)_q$ convolutional code.

Since $M=\lfloor 2\de/(k+2\de)\rfloor+\lceil 2\de/(q+1-k-2\de)\rceil=1$ and $d_1^c \le \D \le d=q+2-k=$ by the generalized Singleton bound (\ref{1:mds}), it suffices to prove that
\begin{eqnarray}\label{5:dineq} d_1^c \ge d=q+2-k,\end{eqnarray}
where $d_1^c$ is the $1$-st column distance of $\cC$ which is given by
\begin{eqnarray*}d_1^c
&=& \min\left\{\wt\left(\hat{v}\right): \hat{v}=\left(\hat{v}_0,\hat{v}_1\right)\in \F^{2(q+1)}: H_1^c \left(\hat{v}\right)^T=0, \hat{v}_0 \ne 0\right\}.
\end{eqnarray*}
The matrix $H_1^c$ is given by
\[H_1^c:=\left[\begin{array}{cc}
H_0&0\\
\wh_1&H_0\end{array}\right].\]
For $v=(v_0,v_1)$ where $v_0,v_1 \in \F^{q+1}$ and $v_0 \ne 0$ such that $H_1^c v^T=0$, that is, $H_0 v_0^T=0$ and $\wh_1 v_0^T+H_0v_1^T=0$. If $v_1=0$, then $H v_0^T=0$, thus $0 \ne v_0 \in \cC$ and $\wt(v_0) \ge d=q+2-k$. If $v_1 \ne 0$, we have $v_0 \in \cC_0$ and $\wt(v_0) \ge d_0=2\ga$. Noting that $\ga>\de$, the equation $\wh_1 v_0^T+H_0v_1^T=0$ implies that $\Ga v_1^T=0$ where
\[\Ga^T=\left[\ul{h}_0^T,\ul{h}_1^T,\ldots,\ul{h}_{\ga-\de-1}^T\right].\]
Define
\[\ol{\Ga}^T=\left[\ul{h}_{-(\ga-\de-1)}^T,\ldots,\ul{h}_0^T,\ul{h}_1^T,\ldots,\ul{h}_{\ga-\de-1}^T\right].\]
It is easy to see that $\Ga v_1^T=0$ if and only if $\ol{\Ga} v_1^T=0$, since $v_1 \ne 0$, it implies that $\wt(v_1) \ge 2(\ga-\de) \ge 2 \de$, hence we still have $\wt(v)=\wt(v_0)+\wt(v_1) \ge 2\ga+2\de=q+2-k$. This confirms (\ref{5:dineq}) and completes the proof of Theorem \ref{5:thm3}.
\end{proof}

\noindent {\bf Construction two.} We assume that $q \geq 4$ is even. Let $\he$ be the submatrix of $H$ which consists of the $\ul{h}_j$ for all even $j$ where $0 \le j \le \tau$, and let $\ho$ be the submatrix of $H$ which consists of the $\ul{h}_j$ for all odd $j$ where $0 \le j \le \tau$. Let $r$ and $s$ the number of even $j$'s and the number of odd $j$'s respectively in the range $0 \le j \le \tau$. Then $r+s=\tau+1$. Let $\cCo$ and $\cCe$ be the linear codes obtained by $\ho,\he$ as parity-check matrices respectively. It is easy to see that both $\cCo,\cCe$ are MDS linear codes when $q$ is even (see Lemma \ref{2:lem2}). This is because, for example, for any $v=(v_0,\ldots,v_q) \in \F^{q+1}$, $H_ev^T=0$ if and only if $\ol{H}_{\mathrm{e}} v^T=0$ where $\ol{H}_{\mathrm{e}}$ is given by
\[\ol{H}_{\mathrm{e}}^T=\left[\ldots,\ul{h}_{-2}^T,\ul{h}_{0}^T,
\ul{h}_{2}^T,\ldots\right].\]
Now it is clear that any $2r-1$ columns of $\he$ are linearly independent. This argument also applies to $\ho$.

We consider $\he,\ho$ as matrices over $\F$. If $2 r-1>2s$, define $\wh_0=\he$ and define $\wh_1$ to be the matrix obtained from $\ho$ by adding zero rows; if $2r-1<2s$, define $\wh_0=\ho$ and define $\wh_1$ to be the matrix obtained from $\he$ by adding zero rows. Define
\[G(D)=\wh_0+\wh_1D, \]
and let $V \in \cF^{q+1}$ be the convolutional code defined by the parity-check matrix $G(D)$. We have the following result.

\begin{theorem} \label{51:thm4}
If $q \geq 4$ is even, the code $V$ defined above is a $\left(q+1,q-\tau,\tau\right)_q$ MDS convolutional code for any $\tau$ such that
\[1 \le \tau \le \frac{q-1}{2}. \]
\end{theorem}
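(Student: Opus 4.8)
The plan is to deduce the result as a direct instance of Lemma \ref{2:lem3}, once the two blocks $\wh_0$ and $\wh_1$ have been correctly identified and the two associated block codes shown to be MDS. The only genuine piece of bookkeeping is the parity of $\tau$, which governs whether the even part $\he$ or the odd part $\ho$ plays the role of $H_0$.

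First I would pin down the sizes. Counting even and odd $j$ in the range $0 \le j \le \tau$ gives $r=\lfloor \tau/2\rfloor+1$ and $s=\lceil \tau/2\rceil$, so that, viewed over $\F$, the matrix $\he$ has $2r-1$ rows and $\ho$ has $2s$ rows. When $\tau$ is even one computes $2r-1=\tau+1>\tau=2s$, so $\wh_0=\he$; when $\tau$ is odd one computes $2s=\tau+1>\tau=2r-1$, so $\wh_0=\ho$. In either case $\kappa:=\rk\wh_0=\tau+1$, while the padded block $\wh_1$ has exactly $\tau$ nonzero rows, so $\rk\wh_1=\tau$. Because $q$ is even the two row counts $2r-1$ and $2s$ can never coincide, so exactly one of the two cases occurs and the construction is unambiguous.

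Next I would record the underlying block code. Since $\bigl[\begin{smallmatrix}\he\\ \ho\end{smallmatrix}\bigr]$ is merely a row rearrangement of $H$, it is a parity-check matrix of the cyclic code $\cC=(g(x))$, which by Lemma \ref{2:lem2} is the $[q+1,\,q-2\tau,\,2\tau+2]_q$ MDS code (here $k=q-2\tau\equiv q\pmod 2$, and $1\le \tau\le (q-1)/2$ forces $2\le k\le q-2$, inside the range $1\le k\le n-2$ required by Lemma \ref{2:lem3}). It remains to note that $\cCe$ and $\cCo$ are themselves MDS: completing $\he$ (resp.\ $\ho$) to the symmetric matrix $\ol{H}_{\mathrm{e}}$ (resp.\ its odd analogue $\ol{H}_{\mathrm{o}}$) via $\beta^q=\beta^{-1}$ exhibits each as an $f$-cyclic code whose defining exponents form an arithmetic progression of common difference $b=2(q-1)$ in $\theta$. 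Since $q$ is even, $(b,q^2-1)=q-1$ and $(q^2-1)/(q-1)=q+1\ge q+1$, so condition (\ref{con12}) of Lemma \ref{2:lem2} holds with equality and both subcodes are MDS; equivalently any $2r-1$ columns of $\he$ and any $2s$ columns of $\ho$ are linearly independent.

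With these facts in hand I would invoke Lemma \ref{2:lem3} with $H_0=\wh_0$ and $H_1$ the remaining block: one has $\ga=\rk H_0=\tau+1>\de=\rk H_1=\tau>0$, and the two codes $\cC_0,\cC_1\in\{\cCe,\cCo\}$ are both MDS. Lemma \ref{2:lem3} then yields that $V$ is a unit-memory MDS convolutional code with parameters $(q+1,\,k+\de,\,\de)_q=(q+1,\,(q-2\tau)+\tau,\,\tau)_q=(q+1,\,q-\tau,\,\tau)_q$, as claimed. The only step carrying real content is the MDS property of the even and odd subcodes, and this is precisely where the hypothesis that $q$ be even enters: for odd $q$ the difference $2(q-1)$ shares the factor $2$ with $q+1$, so $(q^2-1)/(b,q^2-1)=(q+1)/2<q+1$, the progression wraps around before reaching $q+1$ distinct positions, and (\ref{con12}) fails. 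Everything else is the parity bookkeeping above together with the direct application of Lemma \ref{2:lem3}.
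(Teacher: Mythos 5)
Your proof is correct and follows essentially the same route as the paper's: apply Lemma \ref{2:lem3} to the even/odd splitting of $H$ and then sort out the degree and dimension by the parity of $\tau$. The only difference is that you spell out, via the gcd computation $(2(q-1),q^2-1)=q-1$ in Lemma \ref{2:lem2}, why $\cCe$ and $\cCo$ are MDS exactly when $q$ is even --- a point the paper asserts as ``easy to see'' in the construction paragraph rather than in the proof itself.
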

\begin{proof}
Lemma \ref{2:lem3} implies directly that $V$ is an MDS convolutional code of length $q+1$. As for other parameters of $V$, if $\tau$ is even, then $r=\frac{\tau}{2}+1,s=\frac{\tau}{2}$, and $2r-1>2s$, then $V$ has degree $2s=\tau$ and dimension $q+1-(2r-1)=q-\tau$. If $\tau$ is odd, then $r=s=\frac{\tau+1}{2}$, $2r-1<2s$, hence $V$ has degree $2r-1=\tau$ and dimension $q+1-2s=q-\tau$. This concludes the proof of Theorem \ref{51:thm4}.
\end{proof}

\subsection{Part 2. $k \equiv q+1 \pmod{2}$}

Again let $q \geq 5$. Define $\tau:=\frac{q-k-1}{2}$ and
\[g(x) =\prod_{j=-\tau}^{\tau+1} \left(x-\theta\beta^j\right).\]
It is clear that $\deg g=2 \tau+2=q+1-k$, $g(x) \in \F[x]$ and $g(x)|\left(x^{q+1}-\theta^{q+1}\right)$. Let $\cC =(g(x)) \subset \F[x]/\left(x^{q+1}-\theta^{q+1}\right)$ be the constacyclic code of length $q+1$ with generator polynomial $g(x)$. It is known from Lemma \ref{2:lem2} that $\cC$ is a $[q+1,k,2 \tau+3]_q$ MDS code.

Each $v(x)=\sum_{i=0}^{q}v_i x^i \in \cC$ if and only if $g(x)|v(x)$, that is,
\begin{eqnarray} \label{51:gx}v\left(\theta\beta^j\right)=0, \quad \mbox{ for } \forall \, -\tau \le j \le \tau+1.\end{eqnarray}
Because $v_i \in \F$ and $(\theta\beta^j)^q=\theta\beta^{1-j}$, this is equivalent to
\begin{eqnarray} \label{51:gx2}v\left(\theta\beta^j\right)=0, \quad \mbox{ for } \forall \, 1 \le j \le \tau+1.\end{eqnarray}
Denote for each $j$
\[\ul{h}_j=\left[1,\theta\beta^{j},(\theta\beta^j)^{2}, \ldots,(\theta\beta^j)^{q}\right],\]
and define
\[\ol{H}^T=\left[\ul{h}_{-\tau}^T, \ldots,\ul{h}_{-1},\ul{h}_0^T,\ul{h}_1^T,\ldots,\ul{h}_{\tau+1}^T\right],\]
\[H^T=\left[\ul{h}_1^T,\ldots,\ul{h}_{\tau+1}^T\right].\]
(\ref{51:gx}) and (\ref{51:gx2}) imply that for any $v=(v_0,\ldots,v_q) \in \cC \subset \F^{q+1}$ if and only if $\ol{H}v^T=0$ if and only if $Hv^T=0$.

Similar as before, we identify each $\ul{h}_j$ ($1 \le j \le \tau+1$) with $\left[\begin{array}{c}
\ul{h}_{j,1}\\
\ul{h}_{j,2}\end{array}\right]$, which is a $2 \times (q+1)$ matrix over $\F$. Under this identification $H$ becomes a $(2 \tau+2) \times (q+1)$ matrix over $\F$, and $v=(v_0,\ldots,v_q) \in \cC$ if and only if $Hv^T=0$, hence $H$ can be regarded as a parity-check matrix of $\cC$. In what follows each $\ul{h}_j$, $j \ge 1$ shall be thought of as either one row over $\Fqq$ or equivalently two rows over $\F$.

For any positive integers $\ga,\de$ such that $\ga \ge \de$ and $\ga +\de=\tau+1$, we split the matrix $H$ into 2 disjoint submatrices $H_0$ and $H_1$ such that $H=\left[\begin{array}{c}
H_0\\
H_1\end{array}\right]$, where $H_0$ and $H_1$ are given by
\[H_0^T=\left[\ul{h}_1^T,\ldots,\ul{h}_{\ga}^T\right], \quad H_1^T=\left[\ul{h}_{\ga+1}^T,\ul{h}_{\ga+2}^T, \ldots,\ul{h}_{\ga+\de}^T\right].\]
Considered as matrices over $\F$, $H_0,H_1$ are of size $2 \ga \times (q+1)$ and $2 \de \times (q+1)$ respectively. It shall be noted that any $2 \ga$ columns of $H_0$ are linearly independent.

Define two $2\ga \times (q+1)$ matrices $\wh_0,\wh_1$ by
\begin{eqnarray*}
\wh_0=H_0, \quad \wh_1^T=\left[\mathbf{0}_{(q+1) \times (2\ga-2\de)},\ul{h}_{\ga+1}^T,\ul{h}_{\ga+2}^T, \ldots,\ul{h}_{\ga+\de}^T\right].
\end{eqnarray*}
Let $V \in \cF^{q+1}$ be the convolutional code defined by the parity-check matrix $G(D)$ given by
\[g(D)=\wh_0+\wh_1D.\]
We prove the following result.

\begin{theorem} \label{5:thm4}
The code $V$ defined above is a $(q+1,k+2\de,2\de)_q$ unit-memory strongly-MDS convolutional code whenever
\[k \equiv q+1 \pmod{2}, \quad 1 \le k <q, \quad 1 \le \de \le \frac{q-k+1}{6}\,.\]
In this case $V$ also has a maximal distance profile.
\end{theorem}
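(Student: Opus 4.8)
The plan is to mirror the proof of Theorem \ref{5:thm3} almost line for line, with the adjustments forced by the constacyclic (rather than cyclic) setup of Part 2. First I would invoke the same reasoning as Lemma \ref{2:lem3} to see that $G(D)=\wh_0+\wh_1D$ is a minimal encoder and that the resulting code $V$ has parameters $(q+1,k+2\de,2\de)_q$ (here $\rk\wh_0=2\ga$ and $\rk H_1=2\de$ over $\F$, with $\ga\ge\de$). Since $q+1-k=2\tau+2=2(\ga+\de)$, the convolutional parameters give $n-(k+2\de)=2\ga$ and hence
\[M=\left\lfloor \frac{2\de}{k+2\de}\right\rfloor+\left\lceil \frac{2\de}{2\ga}\right\rceil=0+1=1.\]
Thus, by the definition of strongly-MDS and the bounds $d_1^c\le\D\le d=q+2-k$, the whole statement reduces to establishing the single inequality $d_1^c\ge d=q+2-k$, where $d=2\tau+3$ is the minimum distance of the MDS code $\cC$.

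Next I would unwind $d_1^c$ through the truncated matrix
\[H_1^c=\left[\begin{array}{cc}H_0&0\\ \wh_1&H_0\end{array}\right],\]
taking $v=(v_0,v_1)$ with $v_0\ne0$ and $H_1^c v^T=0$, i.e. $H_0v_0^T=0$ and $\wh_1v_0^T+H_0v_1^T=0$. If $v_1=0$, then $\wh_1v_0^T=0$ forces $H_1v_0^T=0$, so $Hv_0^T=0$, $v_0\in\cC$, and $\wt(v_0)\ge d$. If $v_1\ne0$, then $v_0\in\cC_0$; the decisive difference from Part 1 is that here no evaluation point $\theta\beta^j$ is self-conjugate, so $H_0$ carries its full $2\ga$ rows over $\F$ and $\cC_0$ is $[q+1,q+1-2\ga,2\ga+1]_q$ MDS, giving $\wt(v_0)\ge 2\ga+1$. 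Reading off the first $2(\ga-\de)$ (zero) rows of $\wh_1$ then yields $\Ga v_1^T=0$ with $\Ga^T=[\ul{h}_1^T,\ldots,\ul{h}_{\ga-\de}^T]$, that is $v_1(\theta\beta^j)=0$ for all $1\le j\le\ga-\de$.

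The heart of the argument, and where I expect the only genuine bookkeeping, is converting these $\ga-\de$ conditions over $\Fqq$ into a weight bound over $\F$. Because $v_1\in\F^{q+1}$ and $(\theta\beta^j)^q=\theta\beta^{1-j}$, applying the Frobenius gives $v_1(\theta\beta^i)=0$ also for $i\in\{1-(\ga-\de),\ldots,0\}$, so $v_1$ vanishes at $\theta\beta^i$ for the $2(\ga-\de)$ consecutive indices $-(\ga-\de-1)\le i\le\ga-\de$; by Lemma \ref{2:lem2} the associated constacyclic code is MDS, whence $\wt(v_1)\ge 2(\ga-\de)+1$. The constraint $\de\le\frac{q-k+1}{6}=\frac{\tau+1}{3}=\frac{\ga+\de}{3}$ is precisely what forces $\ga\ge2\de$, so $\wt(v_1)\ge 2(\ga-\de)+1\ge 2\de+1>2\de$, and therefore
\[\wt(v)=\wt(v_0)+\wt(v_1)\ge(2\ga+1)+2\de=2\ga+2\de+1=q+2-k=d,\]
which proves $d_1^c\ge d$ and hence that $V$ is strongly-MDS. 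For the maximal distance profile I would note that $L=\lfloor 2\de/(k+2\de)\rfloor+\lfloor 2\de/(2\ga)\rfloor=0$ (as $\ga>\de$), so the requirement becomes $d_0^c=(q+1-k-2\de)+1=2\ga+1$, which holds automatically since $d_0^c$ is just the minimum distance of the MDS code $\cC_0$. The main obstacle is thus not conceptual but careful index-and-parity tracking in the constacyclic Frobenius step, to confirm that one really obtains exactly $2(\ga-\de)$ consecutive roots and the clean bound $\wt(v_0)\ge 2\ga+1$.
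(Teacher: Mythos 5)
Your proposal is correct and follows essentially the same route as the paper's own proof: reduce to $d_1^c\ge q+2-k$ via $M=1$, split on $v_1=0$ versus $v_1\ne 0$, use the MDS property of $\cC_0$ for $\wt(v_0)\ge 2\ga+1$, and extend $\Ga v_1^T=0$ by Frobenius conjugation to $2(\ga-\de)$ consecutive roots $\theta\beta^i$, $1-(\ga-\de)\le i\le\ga-\de$, giving $\wt(v_1)\ge 2(\ga-\de)+1\ge 2\de$ from $\ga\ge 2\de$. Your added remarks on the maximal distance profile ($L=0$, $d_0^c=2\ga+1$) match the argument the paper gives for the analogous claim in Theorem \ref{3:thm1}(ii).
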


\begin{proof}
The proof is very similar to that of Theorem \ref{5:thm3}. For the sake of completeness, we provide all details here.

First it is clear that $V$ is an $(q+1,k+2\de,2\de)_q$ convolutional code.

Since $M=\lfloor 2\de/(k+2\de)\rfloor+\lceil 2\de/(q+1-k-2\de)\rceil=1$ and $d_1^c \le \D \le d=q+2-k$ by the generalized Singleton bound, it suffices to prove that
\begin{eqnarray}\label{51:dineq} d_1^c \ge d=q+2-k,\end{eqnarray}
where $d_1^c$ is the $1$-st column distance of $\cC$ which is given by
\begin{eqnarray*}d_1^c
&=& \min\left\{\wt\left(\hat{v}\right): \hat{v}=\left(\hat{v}_0,\hat{v}_1\right)\in \F^{2n}: H_1^c \left(\hat{v}\right)^T=0, \hat{v}_0 \ne 0\right\},
\end{eqnarray*}
and the matrix $H_1^c$ is given by
\[H_1^c:=\left[\begin{array}{cc}
H_0&0\\
\wh_1&H_0\end{array}\right].\]
For $v=(v_0,v_1)$ where $v_0,v_1 \in \F^{q+1}$ and $v_0 \ne 0$ such that $H_1^c v^T=0$, that is, $H_0 v_0^T=0$ and $\wh_1 v_0^T+H_0v_1^T=0$. If $v_1=0$, then $H v_0^T=0$, thus $0 \ne v_0 \in \cC$ and $\wt(v_0) \ge d=q+2-k$. If $v_1 \ne 0$, we have $v_0 \in \cC_0$ and $\wt(v_0) \ge d_0=2\ga+1$. Noting that $\ga>\de$, the equation $\wh_1 v_0^T+H_0v_1^T=0$ implies that $\Ga v_1^T=0$ where
\[\Ga^T=\left[\ul{h}_1^T,\ldots,\ul{h}_{\ga-\de}^T\right].\]
Define
\[\ol{\Ga}^T=\left[\ul{h}_{1-(\ga-\de)}^T,\ldots,\ul{h}_0^T,
\ul{h}_1^T,\ldots,
\ul{h}_{\ga-\de}^T\right].\]
It is easy to see that $\Ga v_1^T=0$ if and only if $\ol{\Ga} v_1^T=0$, since $v_1 \ne 0$, it implies that $\wt(v_1) \ge 2(\ga-\de)+1 \ge 2 \de$, hence we still have $\wt(v)=\wt(v_0)+\wt(v_1) \ge 2\ga+2\de+1=q+2-k$. This confirms (\ref{51:dineq}) and completes the proof of Theorem \ref{5:thm4}.
\end{proof}

\section{Examples of MDS convolutional codes over $\mathbb{F}_8$} \label{sec6}

In this section, we present some explicit examples of MDS and strongly-MDS convolutional codes over $\mathbb{F}_8$. Hence $q=8$. Let $\theta$ be a root of the irreducible polynomial $x^3+x+1$ over $\mathbb{F}_2$. We know that $\theta$ is a primitive element of $\mathbb{F}_8$, that is, $\theta$ has order $8-1=7$. The elements of $\mathbb{F}_8$ are: $0,1, \theta, \theta^2, \theta^3=1+\theta, \theta^4=\theta+\theta^2, \theta^5=1+\theta+\theta^2, \theta^6=1+\theta^2$. We consider the following cases:

\subsection{Case 1: $n=q-1=7$}


From Theorem \ref{3:thm1}, we consider cyclic codes $\cC$ of length 7 over $\mathbb{F}_8$, that is, $\cC =(g(x)) \subset \mathbb{F}_8[x]/\left(x^7-1\right)$.

\noindent{\bf{Example 1: MDS $(7,4,2)_8$}}

Here $k=2, \de=2$. Consider the cyclic code $\cC$ of length 7 over $\mathbb{F}_8$ with generator polynomial $g(x)=\prod_{i=0}^4 (x-\theta^i)$. Note that $\deg g=5$, so $\cC$ is a $[7, 2, 6]_8$ MDS linear code by Lemma 1.

A parity-check matrix $H$ of $\cC$ is given by
$$H = \begin{bmatrix}
1 & 1 & 1 & 1 & 1 & 1 & 1\\
1 & \theta & \theta^2 & \theta^3 & \theta^4 & \theta^5 & \theta^6\\
1 & \theta^2 & \theta^4 & \theta^6 & \theta^8 & \theta^{10} & \theta^{12}\\
1 & \theta^3 & \theta^6 & \theta^9 & \theta^{12} & \theta^{15} & \theta^{18}\\
1 & \theta^4 & \theta^8 & \theta^{12} & \theta^{16} & \theta^{20} & \theta^{24}
\end{bmatrix}$$

$$=\begin{bmatrix}
1 & 1 & 1 & 1 & 1 & 1 & 1\\
1 & \theta & \theta^2 & 1+\theta & \theta+\theta^2 & 1+\theta+\theta^2 & 1+\theta^2\\
1 & \theta^2 & \theta+\theta^2 & 1+\theta^2 & \theta & 1+\theta & 1+\theta+\theta^2\\
1 & 1+\theta & 1+\theta^2 & \theta^2 & 1+\theta+\theta^2 & \theta & \theta+\theta^2\\
1 & \theta+\theta^2 & \theta & 1+\theta+\theta^2 & \theta^2 & 1+\theta^2 & 1+\theta
\end{bmatrix}$$

Now let $H_{0}$ be the first $n-k-\de=3$ rows of $H$ and $H_{1}$ be the rest $\delta=2$ rows. Then consider the convolutional code $V$ with parity-check matrix

$$G(D)=H_{0}+\widetilde{H}_{1}D$$

\noindent{\footnotesize
\arraycolsep=3pt
\medmuskip = 1mu
$$=\begin{bmatrix}
1 & 1 & 1 & 1 & 1 & 1 & 1\\
1+D & \theta+(1+\theta)D & \theta^2+(1+\theta^2)D & (1+\theta)+\theta^2D & (\theta+\theta^2)+(1+\theta+\theta^2)D & (1+\theta+\theta^2)+\theta D & (1+\theta^2)+(\theta+\theta^2)D\\
1+D & \theta^2+(\theta+\theta^2)D & (\theta+\theta^2)+\theta D & (1+\theta^2)+(1+\theta+\theta^2)D & \theta+\theta^2D & (1+\theta)+(1+\theta^2)D & (1+\theta+\theta^2)+(1+\theta)D
\end{bmatrix}$$
}

By Theorem \ref{3:thm1} (i), $V$ is a unit-memory MDS convolutional code with parameters $(7, 4, 2)_8$. The free distance of $V$ is $6$. Noting by (ii) and (iii) of Theorem \ref{3:thm1}, $V$ has a maximal distance profile and is also strongly-MDS.

\noindent{\bf{Example 2: MDS $(7,3,2)_8$}}

Here $k=1, \de=2$. Consider the cyclic code $\cC$ of length 7 over $\mathbb{F}_8$ with generator polynomial $g(x)=\prod_{i=0}^5 (x-\theta^i)$. Note that $\deg g=6$, so $\cC$ is a $[7, 1, 7]_8$ MDS linear code by Lemma 1.

A parity-check matrix $H$ of $\cC$ is given by
$$H = \begin{bmatrix}
1 & 1 & 1 & 1 & 1 & 1 & 1\\
1 & \theta & \theta^2 & \theta^3 & \theta^4 & \theta^5 & \theta^6\\
1 & \theta^2 & \theta^4 & \theta^6 & \theta^8 & \theta^{10} & \theta^{12}\\
1 & \theta^3 & \theta^6 & \theta^9 & \theta^{12} & \theta^{15} & \theta^{18}\\
1 & \theta^4 & \theta^8 & \theta^{12} & \theta^{16} & \theta^{20} & \theta^{24}\\
1 & \theta^5 & \theta^{10} & \theta^{15} & \theta^{20} & \theta^{25} & \theta^{30}
\end{bmatrix}$$
$$=\begin{bmatrix}
1 & 1 & 1 & 1 & 1 & 1 & 1\\
1 & \theta & \theta^2 & 1+\theta & \theta+\theta^2 & 1+\theta+\theta^2 & 1+\theta^2\\
1 & \theta^2 & \theta+\theta^2 & 1+\theta^2 & \theta & 1+\theta & 1+\theta+\theta^2\\
1 & 1+\theta & 1+\theta^2 & \theta^2 & 1+\theta+\theta^2 & \theta & \theta+\theta^2\\
1 & \theta+\theta^2 & \theta & 1+\theta+\theta^2 & \theta^2 & 1+\theta^2 & 1+\theta\\
1 & 1+\theta+\theta^2 & 1+\theta & \theta & 1+\theta^2 & \theta+\theta^2 & \theta^2
\end{bmatrix}$$

Now let $H_{0}$ be the first $n-k-\de=4$ rows of $H$ and $H_{1}$ be the rest $\de=2$ rows. Then consider the convolutional code $V$ with parity-check matrix

$$G(D)=H_{0}+\widetilde{H}_{1}D$$
\noindent{\footnotesize
\arraycolsep=3pt
\medmuskip = 1mu
$$=\begin{bmatrix}
1 & 1 & 1 & 1 & 1 & 1 & 1\\
1 & \theta & \theta^2 & 1+\theta & \theta+\theta^2 & 1+\theta+\theta^2 & 1+\theta^2\\
1+D & \theta^2+(\theta+\theta^2)D & (\theta+\theta^2)+\theta D & (1+\theta^2)+(1+\theta+\theta^2)D & \theta+\theta^2D & (1+\theta)+(1+\theta^2)D & (1+\theta+\theta^2)+(1+\theta)D\\
1+D & (1+\theta)+(1+\theta+\theta^2)D & (1+\theta^2)+(1+\theta)D & \theta^2+\theta D & (1+\theta+\theta^2)+(1+\theta^2)D & \theta+(\theta+\theta^2)D & (\theta+\theta^2)+\theta^2D
\end{bmatrix}$$
}

By (i) of Theorem \ref{3:thm1} (i), $V$ is a unit-memory MDS convolutional code with parameters $(7, 3, 2)_8$. The free distance of $V$ is $7$. Noting by (ii) and (iii) of Theorem \ref{3:thm1}, $V$ has a maximal distance profile and is also strongly-MDS.

\noindent{\bf{Example 3: MDS $(7,4,3)_8$}}

Here $k=1, \de=3$. We use the cyclic code $\cC$ defined in {\bf Example 2}, and we let $H_{0}$ be the first $n-k-\de=3$ rows of $H$ and $H_{1}$ be the rest $\de=3$ rows of $H$ where $H$ is defined in {\bf Example 2}. Then consider the convolutional code $V$ with parity-check matrix

$$G(D)=H_{0}+\widetilde{H}_{1}D$$
\noindent{\footnotesize
\arraycolsep=3pt
\medmuskip = 1mu
$$=\begin{bmatrix}
1+D & 1+(1+\theta)D & 1+(1+\theta^2)D & 1+\theta^2D & 1+(1+\theta+\theta^2)D & 1+\theta D & 1+(\theta+\theta^2)D\\
1+D & \theta+(\theta+\theta^2)D & \theta^2+\theta D & (1+\theta)+(1+\theta+\theta^2)D & (\theta+\theta^2)+\theta^2D & (1+\theta+\theta^2)+(1+\theta^2)D & (1+\theta^2)+(1+\theta)D\\
1+D & \theta^2+(1+\theta+\theta^2)D & (\theta+\theta^2)+(1+\theta)D & (1+\theta^2)+\theta D & \theta+(1+\theta^2)D & (1+\theta)+(\theta+\theta^2)D & (1+\theta+\theta^2)+\theta^2D
\end{bmatrix}$$
}
By Theorem \ref{3:thm1} (i), $V$ is a unit-memory MDS convolutional code with parameters $(7, 4, 3)_8$. The free distance of $V$ is $7$.

\subsection{Case 2: $n=q=8$}

From Theorem \ref{4:thm2}, we use the $\grs$ codes $\cC'_k=\grs_k(\zv, \mathbf{1})$, where $\zv=(0, \theta, \theta^2, \theta^3, \theta^4, \theta^5, \theta^6, \theta^7=1)$ and $\mathbf{1}$ is the all-one vector of length 8. Note that ${\cC'}_k^{\bot}=\grs_{8-k}(\zv, \mathbf{1})$ and $\cC'_k$ is MDS with parameters $[8, k, 9-k]_8$ for $1 \le k \le 8$.

\noindent{\bf{Example 4: MDS $(8,4,2)_8$}}

Here $k=2, \de=2$. A parity-check matrix $H$ of $\cC'_2$ is given by

$$H=\begin{bmatrix}
1 & 1 & 1 & 1 & 1 & 1 & 1 & 1\\
0 & \theta & \theta^2 & \theta^3 & \theta^4 & \theta^5 & \theta^6 & 1\\
0 & \theta^2 & \theta^4 & \theta^6 & \theta^8 & \theta^{10} & \theta^{12} & 1\\
0 & \theta^3 & \theta^6 & \theta^{9} & \theta^{12} & \theta^{15} & \theta^{18} & 1\\
0 & \theta^4 & \theta^8 & \theta^{12} & \theta^{16} & \theta^{20} & \theta^{24} & 1\\
0 & \theta^5 & \theta^{10} & \theta^{15} & \theta^{20} & \theta^{25} & \theta^{30} & 1
\end{bmatrix}$$
$$=\begin{bmatrix}
1 & 1 & 1 & 1 & 1 & 1 & 1 & 1\\
0 & \theta & \theta^2 & 1+\theta & \theta+\theta^2 & 1+\theta+\theta^2 & 1+\theta^2 & 1\\
0 & \theta^2 & \theta+\theta^2 & 1+\theta^2 & \theta & 1+\theta & 1+\theta+\theta^2 & 1\\
0 & 1+\theta & 1+\theta^2 & \theta^2 & 1+\theta+\theta^2 & \theta & \theta+\theta^2 & 1\\
0 & \theta+\theta^2 & \theta & 1+\theta+\theta^2 & \theta^2 & 1+\theta^2 & 1+\theta & 1\\
0 & 1+\theta+\theta^2 & 1+\theta & \theta & 1+\theta^2 & \theta+\theta^2 & \theta^2 & 1
\end{bmatrix}$$

Now let $H_{0}$ be the first $q-k-\de=4$ rows of $H$ and $H_{1}$ be the rest $\de=2$ rows but swapped in position. Then consider the convolutional code $V$ with parity-check matrix

$$G(D)=H_{0}+\widetilde{H}_{1}D$$
\noindent{\footnotesize
\arraycolsep=3pt
\medmuskip = 1mu
$$=\begin{bmatrix}
1 & 1 & 1 & 1 & 1 & 1 & 1 & 1\\
0 & \theta & \theta^2 & 1+\theta & \theta+\theta^2 & 1+\theta+\theta^2 & 1+\theta^2 & 1\\
0 & \theta^2+(1+\theta+\theta^2)D & (\theta+\theta^2)+(1+\theta)D & (1+\theta^2)+\theta D & \theta+(1+\theta^2)D & (1+\theta)+(\theta+\theta^2)D & (1+\theta+\theta^2)+\theta^2D & 1+D\\
0 & (1+\theta)+(\theta+\theta^2)D & (1+\theta^2)+\theta D & \theta^2+(1+\theta+\theta^2)D & (1+\theta+\theta^2)+\theta^2D & \theta+(1+\theta^2)D & (\theta+\theta^2)+(1+\theta)D & 1+D
\end{bmatrix}$$
}
By Theorem \ref{4:thm2} (i), $V$ is a unit-memory MDS convolutional code with parameters $(8, 4, 2)_8$. The free distance of $V$ is $7$. Noting that by (ii) and (iii) of Theorem \ref{4:thm2}, $V$ has a maximal distance profile and is also strongly-MDS.

\noindent{\bf{Example 5: MDS $(8,5,3)_8$}}

Here $k=2, \de=3$. Let $H_{0}$ be the first $q-k-\de=3$ rows of $H$ and $H_{1}$ be the rest $\de=3$ rows of $H$ but reversed in position where $H$ is defined in {\bf Example 4}. Then consider the convolutional code $V$ with parity-check matrix

$$G(D)=H_{0}+\widetilde{H}_{1}D$$
\noindent{\footnotesize
\arraycolsep=3pt
\medmuskip = 1mu
$$=\begin{bmatrix}
1 & 1+(1+\theta+\theta^2)D & 1+(1+\theta)D & 1+\theta D & 1+(1+\theta^2)D & 1+(\theta+\theta^2)D & 1+\theta^2D & 1+D\\
0 & \theta+(\theta+\theta^2)D & \theta^2+\theta D & (1+\theta)+(1+\theta+\theta^2)D & (\theta+\theta^2)+\theta^2D & (1+\theta+\theta^2)+(1+\theta^2)D & (1+\theta^2)+(1+\theta)D & 1+D\\
0 & \theta^2+(1+\theta)D & (\theta+\theta^2)+(1+\theta^2)D & (1+\theta^2)+\theta^2D & \theta+(1+\theta+\theta^2)D & (1+\theta)+\theta D & (1+\theta+\theta^2)+(\theta+\theta^2)D & 1+D
\end{bmatrix}$$
}

By Theorem \ref{4:thm2} (i), $V$ is a unit-memory MDS convolutional code with parameters $(8, 5, 3)_8$. The free distance of $V$ is $7$.

\noindent{\bf{Example 6: MDS $(8,3,2)_8$}}

Here $k=1, \de=2$. A parity-check matrix $H$ of $\cC'_1$ is given by

$$H=\begin{bmatrix}
1 & 1 & 1 & 1 & 1 & 1 & 1 & 1\\
0 & \theta & \theta^2 & \theta^3 & \theta^4 & \theta^5 & \theta^6 & 1\\
0 & \theta^2 & \theta^4 & \theta^6 & \theta^8 & \theta^{10} & \theta^{12} & 1\\
0 & \theta^3 & \theta^6 & \theta^{9} & \theta^{12} & \theta^{15} & \theta^{18} & 1\\
0 & \theta^4 & \theta^8 & \theta^{12} & \theta^{16} & \theta^{20} & \theta^{24} & 1\\
0 & \theta^5 & \theta^{10} & \theta^{15} & \theta^{20} & \theta^{25} & \theta^{30} & 1\\
0 & \theta^6 & \theta^{12} & \theta^{18} & \theta^{24} & \theta^{30} & \theta^{36} & 1
\end{bmatrix}$$
$$=\begin{bmatrix}
1 & 1 & 1 & 1 & 1 & 1 & 1 & 1\\
0 & \theta & \theta^2 & 1+\theta & \theta+\theta^2 & 1+\theta+\theta^2 & 1+\theta^2 & 1\\
0 & \theta^2 & \theta+\theta^2 & 1+\theta^2 & \theta & 1+\theta & 1+\theta+\theta^2 & 1\\
0 & 1+\theta & 1+\theta^2 & \theta^2 & 1+\theta+\theta^2 & \theta & \theta+\theta^2 & 1\\
0 & \theta+\theta^2 & \theta & 1+\theta+\theta^2 & \theta^2 & 1+\theta^2 & 1+\theta & 1\\
0 & 1+\theta+\theta^2 & 1+\theta & \theta & 1+\theta^2 & \theta+\theta^2 & \theta^2 & 1\\
0 & 1+\theta^2 & 1+\theta+\theta^2 & \theta+\theta^2 & 1+\theta & \theta^2 & \theta & 1
\end{bmatrix}$$

Now let $H_{0}$ be the first $q-k-\de=5$ rows of $H$ and $H_{1}$ be the rest $\de=2$ rows but swapped in position. Then consider the convolutional code $V$ with parity-check matrix

$$G(D)=H_{0}+\widetilde{H}_{1}D$$
\noindent{\footnotesize
\arraycolsep=3pt
\medmuskip = 1mu
$$=\begin{bmatrix}
1 & 1 & 1 & 1 & 1 & 1 & 1 & 1\\
0 & \theta & \theta^2 & 1+\theta & \theta+\theta^2 & 1+\theta+\theta^2 & 1+\theta^2 & 1\\
0 & \theta^2 & \theta+\theta^2 & 1+\theta^2 & \theta & 1+\theta & 1+\theta+\theta^2 & 1\\
0 & (1+\theta)+(1+\theta^2)D & (1+\theta^2)+(1+\theta+\theta^2)D & \theta^2+(\theta+\theta^2)D & (1+\theta+\theta^2)+(1+\theta)D & \theta+\theta^2D & (\theta+\theta^2)+\theta D & 1+D\\
0 & (\theta+\theta^2)+(1+\theta+\theta^2)D & \theta+(1+\theta)D & (1+\theta+\theta^2)+\theta D & \theta^2+(1+\theta^2)D & (1+\theta^2)+(\theta+\theta^2)D & (1+\theta)+\theta^2D & 1+D
\end{bmatrix}$$
}
By Theorem \ref{4:thm2} (i), $V$ is a unit-memory MDS convolutional code with parameters $(8, 3, 2)_8$. The free distance of $V$ is $8$. By (ii) and (iii) of Theorem \ref{4:thm2}, $V$ has a maximal distance profile and is also strongly-MDS.

\noindent{\bf{Example 7: MDS $(8,4,3)_8$}}

Here $k=1, \de=3$. Let $H_{0}$ be the first $q-k-\de=4$ rows of $H$ and $H_{1}$ be the rest $\de=3$ rows of $H$ but reversed in position where $H$ is defined in {\bf Example 6}. Then consider the convolutional code $V$ with parity-check matrix

$$G(D)=H_{0}+\widetilde{H}_{1}D$$
\noindent{\footnotesize
\arraycolsep=3pt
\medmuskip = 1mu
$$=\begin{bmatrix}
1 & 1 & 1 & 1 & 1 & 1 & 1 & 1\\
0 & \theta+(1+\theta^2)D & \theta^2+(1+\theta+\theta^2)D & (1+\theta)+(\theta+\theta^2)D & (\theta+\theta^2)+(1+\theta)D & (1+\theta+\theta^2)+\theta^2D & (1+\theta^2)+\theta D & 1+D\\
0 & \theta^2+(1+\theta+\theta^2)D & (\theta+\theta^2)+(1+\theta)D & (1+\theta^2)+\theta D & \theta+(1+\theta^2)D & (1+\theta)+(\theta+\theta^2)D & (1+\theta+\theta^2)+\theta^2D & 1+D\\
0 & (1+\theta)+(\theta+\theta^2)D & (1+\theta^2)+\theta D & \theta^2+(1+\theta+\theta^2)D & (1+\theta+\theta^2)+\theta^2D & \theta+(1+\theta^2)D & (\theta+\theta^2)+(1+\theta)D & 1+D
\end{bmatrix}$$
}
By Theorem \ref{4:thm2} (i), $V$ is a unit-memory MDS convolutional code with parameters $(8, 4, 3)_8$. The free distance of $V$ is $8$. By (ii) of Theorem \ref{4:thm2}, $V$ has a maximal distance profile.

\subsection{Case 3: $n=q+1=9$}

We shall use either {\bf Construction one} or {\bf two} in part 1 of Section \ref{sec5} when $k \equiv q \pmod{2}$ and part 2 of Section \ref{sec5} when $k \equiv q+1 \pmod{2}$. We still use the same $\theta$ as in Cases 1 and 2, and let $\ga$ be a root of the polynomial $x^2+x+1 \in \mathbb{F}_2[x]$. Then $\theta$ has degree 3 over $\mathbb{F}_2$, while $\ga$ has degree 2 over $\mathbb{F}_2$. Hence the field $\mathbb{F}_2(\theta, \ga)$ has degree 6 over $\mathbb{F}_2$, i.e. it is isomorphic to $\mathbb{F}_{64}$.

Let $\beta=\theta(\theta+\ga) \in \mathbb{F}_{64}^\times$. It can be easily seen that $\beta^2=1+\theta\beta$, $\beta^3=\theta+(1+\theta^2)\beta$ and $\beta^9=1$, so the polynomial $f(x)=x^9-1 \in \mathbb{F}_8[x]$ has zeros exactly given by $\cZ(f)=\{\beta^i: 0 \leq i \leq 8\}$.

\noindent{\bf{Example 8: MDS $(9, 6, 2)_8$}}

Consider the cyclic code $\cC$ of length 9 over $\mathbb{F}_8$ with generator polynomial $h(x)=\prod_{i=-2}^2 (x-\beta^i)$. As a BCH code, this code is MDS with parameters $[9, 4, 6]_8$. We use {\bf Construction one} in this case, and we have $k=4$ and $\delta=1$.

To give a parity-check matrix of $\cC$, we start from the matrix

$$\ol{H}=\begin{bmatrix}
1 & 1 & 1 & 1 & 1 & 1 & 1 & 1 & 1\\
1 & \beta & \beta^2 & \beta^3 & \beta^4 & \beta^5 & \beta^6 & \beta^7 & \beta^8\\
1 & \beta^2 & \beta^4 & \beta^6 & \beta^8 & \beta^{10} & \beta^{12} & \beta^{14} & \beta^{16}
\end{bmatrix}$$
\noindent{\footnotesize
\arraycolsep=3pt
\medmuskip = 1mu
$$=\begin{bmatrix}
1 & 1 & 1 & 1 & 1 & 1 & 1 & 1 & 1\\
1 & \beta & 1+\theta\beta & \theta+(1+\theta^2)\beta & (1+\theta^2)+(1+\theta)\beta & (1+\theta)+(1+\theta)\beta & (1+\theta)+(1+\theta^2)\beta & (1+\theta^2)+\theta\beta & \theta+\beta\\
1 & 1+\theta\beta & (1+\theta^2)+(1+\theta)\beta & (1+\theta)+(1+\theta^2)\beta & \theta+\beta & \beta & \theta+(1+\theta^2)\beta & (1+\theta)+(1+\theta)\beta & (1+\theta^2)+\theta\beta
\end{bmatrix}$$
}
$$=\begin{bmatrix}
\ul{k}_{1;0}\\
\ul{k}_{1;1}+\ul{k}_{1;-1}\beta\\
\ul{k}_{1;2}+\ul{k}_{1;-2}\beta
\end{bmatrix}$$
where each $\ul{k}_{1; i} \in \mathbb{F}_8^9, -2 \le i \le 2$ since $\{1, \beta\}$ is a basis of $\mathbb{F}_{64}=\mathbb{F}_8(\beta)$ over $\mathbb{F}_8$. A parity-check matrix of $\cC$ is given by
$$H=\begin{bmatrix}
\ul{k}_{1;0}\\
\ul{k}_{1;1}\\
\ul{k}_{1;-1}\\
\ul{k}_{1;2}\\
\ul{k}_{1;-2}
\end{bmatrix}$$
$$=\begin{bmatrix}
1 & 1 & 1 & 1 & 1 & 1 & 1 & 1 & 1\\
1 & 0 & 1 & \theta & 1+\theta^2 & 1+\theta & 1+\theta & 1+\theta^2 & \theta\\
0 & 1 & \theta & 1+\theta^2 & 1+\theta & 1+\theta & 1+\theta^2 & \theta & 1\\
1 & 1 & 1+\theta^2 & 1+\theta & \theta & 0 & \theta & 1+\theta & 1+\theta^2\\
0 & \theta & 1+\theta & 1+\theta^2 & 1 & 1 & 1+\theta^2 & 1+\theta & \theta
\end{bmatrix}.$$

Let $H_{0}$ be the first 3 rows of $H$ and $H_{1}$ be the rest 2 rows. Consider the convolutional code $V$ with parity-check matrix

$$H(D)=H_{0}+\widetilde{H}_{1}D=\begin{bmatrix}\ul{k}_{1;0}\\
\ul{k}_{1;1}+\ul{k}_{1;2}D\\
\ul{k}_{1;-1}+\ul{k}_{1;-2}D
\end{bmatrix}$$
\noindent{\footnotesize
\arraycolsep=3pt
\medmuskip = 1mu
$$=\begin{bmatrix}
1 & 1 & 1 & 1 & 1 & 1 & 1 & 1 & 1\\
1+D & D & 1+(1+\theta^2)D & \theta+(1+\theta)D & (1+\theta^2)+\theta D & 1+\theta & (1+\theta)+\theta D & (1+\theta^2)+(1+\theta)D & \theta+(1+\theta^2)D\\
0 & 1+\theta D & \theta+(1+\theta)D & (1+\theta^2)+(1+\theta^2)D & (1+\theta)+D & (1+\theta)+D & (1+\theta^2)+(1+\theta^2)D & \theta+(1+\theta)D & 1+\theta D
\end{bmatrix}$$
}

By Theorem \ref{5:thm3}, $V$ is strongly-MDS with parameters $(9, 6, 2)_8$ and also has a maximal distance profile. The free distance of $V$ is $6$.


\noindent{\bf{Example 9: MDS $(9, 5, 3)_8$}}

Let $\cC$ be the cyclic code of length 9 over $\mathbb{F}_8$ with generator polynomial $h(x)=\prod_{i=-3}^3 (x-\beta^i)$. This is a BCH code, and it is MDS with parameters $[9, 2, 8]_8$. We use {\bf Construction two} in this case, and we have $\tau=3$.

To give a parity-check matrix of $\cC$, we start from the matrix

$$\ol{H}=\begin{bmatrix}
1 & 1 & 1 & 1 & 1 & 1 & 1 & 1 & 1\\
1 & \beta & \beta^2 & \beta^3 & \beta^4 & \beta^5 & \beta^6 & \beta^7 & \beta^8\\
1 & \beta^2 & \beta^4 & \beta^6 & \beta^8 & \beta^{10} & \beta^{12} & \beta^{14} & \beta^{16}\\
1 & \beta^3 & \beta^6 & \beta^9 & \beta^{12} & \beta^{15} & \beta^{18} & \beta^{21} & \beta^{24}\\
\end{bmatrix}$$
\noindent{\footnotesize
\arraycolsep=3pt
\medmuskip = 1mu
$$=\begin{bmatrix}
1 & 1 & 1 & 1 & 1 & 1 & 1 & 1 & 1\\
1 & \beta & 1+\theta\beta & \theta+(1+\theta^2)\beta & (1+\theta^2)+(1+\theta)\beta & (1+\theta)+(1+\theta)\beta & (1+\theta)+(1+\theta^2)\beta & (1+\theta^2)+\theta\beta & \theta+\beta\\
1 & 1+\theta\beta & (1+\theta^2)+(1+\theta)\beta & (1+\theta)+(1+\theta^2)\beta & \theta+\beta & \beta & \theta+(1+\theta^2)\beta & (1+\theta)+(1+\theta)\beta & (1+\theta^2)+\theta\beta\\
1 & \theta+(1+\theta^2)\beta & (1+\theta)+(1+\theta^2)\beta & 1 & \theta+(1+\theta^2)\beta & (1+\theta)+(1+\theta^2)\beta & 1 & \theta+(1+\theta^2)\beta & (1+\theta)+(1+\theta^2)\beta
\end{bmatrix}$$
}
$$=\begin{bmatrix}
\ul{k}_{1;0}\\
\ul{k}_{1;1}+\ul{k}_{1;-1}\beta\\
\ul{k}_{1;2}+\ul{k}_{1;-2}\beta\\
\ul{k}_{1;3}+\ul{k}_{1;-3}\beta
\end{bmatrix},$$
where each $\ul{k}_{1; i} \in \mathbb{F}_8^9, -3 \le i \le 3$. Here the first index is kept as 1 since essentially those vectors for $-2 \le i \le 2$ are the same as those in {\bf Example 8}.

As $\{1, \beta\}$ is a basis of $\mathbb{F}_{64}=\mathbb{F}_8(\beta)$ over $\mathbb{F}_8$, a parity-check matrix of $\cC$ is given by
$$H=\begin{bmatrix}
\ul{k}_{1;0}\\
\ul{k}_{1;1}\\
\ul{k}_{1;-1}\\
\ul{k}_{1;2}\\
\ul{k}_{1;-2}\\
\ul{k}_{1;3}\\
\ul{k}_{1;-3}
\end{bmatrix}$$
$$=\begin{bmatrix}
1 & 1 & 1 & 1 & 1 & 1 & 1 & 1 & 1\\
1 & 0 & 1 & \theta & 1+\theta^2 & 1+\theta & 1+\theta & 1+\theta^2 & \theta\\
0 & 1 & \theta & 1+\theta^2 & 1+\theta & 1+\theta & 1+\theta^2 & \theta & 1\\
1 & 1 & 1+\theta^2 & 1+\theta & \theta & 0 & \theta & 1+\theta & 1+\theta^2\\
0 & \theta & 1+\theta & 1+\theta^2 & 1 & 1 & 1+\theta^2 & 1+\theta & \theta\\
1 & \theta & 1+\theta & 1 & \theta & 1+\theta & 1 & \theta & 1+\theta\\
0 & 1+\theta^2 & 1+\theta^2 & 0 & 1+\theta^2 & 1+\theta^2 & 0 & 1+\theta^2 & 1+\theta^2
\end{bmatrix}$$

Let $H_{0}=\begin{bmatrix}
\ul{k}_{1;1}\\
\ul{k}_{1;-1}\\
\ul{k}_{1;3}\\
\ul{k}_{1;-3}
\end{bmatrix}$ and $H_{1}=\begin{bmatrix}
\ul{k}_{1;0}\\
\ul{k}_{1;2}\\
\ul{k}_{1;-2}
\end{bmatrix}$. Consider the convolutional code $V$ with parity-check matrix

$$H(D)=H_{0}+\widetilde{H}_{1}D=\begin{bmatrix}
\ul{k}_{1;1}\\
\ul{k}_{1;-1}+\ul{k}_{1;0}D\\
\ul{k}_{1;3}+\ul{k}_{1;2}D\\
\ul{k}_{1;-3}+\ul{k}_{1;-2}D
\end{bmatrix}$$
\noindent{\footnotesize
\arraycolsep=3pt
\medmuskip = 1mu
$$=\begin{bmatrix}
1 & 0 & 1 & \theta & 1+\theta^2 & 1+\theta & 1+\theta & 1+\theta^2 & \theta\\
D & 1+D & \theta+D & (1+\theta^2)+D & (1+\theta)+D & (1+\theta)+D & (1+\theta^2)+D & \theta+D & 1+D\\
1+D & \theta+D & (1+\theta)+(1+\theta^2)D & 1+(1+\theta)D & \theta+\theta D & 1+\theta & 1+\theta D & \theta+(1+\theta)D & (1+\theta)+(1+\theta^2)D\\
0 & (1+\theta^2)+\theta D & (1+\theta^2)+(1+\theta)D & (1+\theta^2)D & (1+\theta^2)+D & (1+\theta^2)+D & (1+\theta^2)D & (1+\theta^2)+(1+\theta)D & (1+\theta^2)+\theta D
\end{bmatrix}$$
}

By Theorem \ref{51:thm4}, $V$ is MDS with parameters $(9, 5, 3)_8$. The free distance of $V$ is $8$. Since the block code with parity-check matrix $H_{0}$ is MDS, the code $V$ also has a maximal distance profile.

\noindent{\bf{Example 10: MDS $(9, 3, 2)_8$}}

Since $k=1$ is not congruent to $q$ modulo 2, we use $f(x)=x^9-\al^9 \in \mathbb{F}_8[x]$, so its set of zeros is exactly $\cZ(f)=\{\al\beta^i: 0 \leq i \leq 8\}$. Hence the block codes we consider are constacyclic codes of length 9 over $\mathbb{F}_8$. Here $\al$ can be taken to be any primitive element of $\mathbb{F}_{64}$ such that $\al^7=\beta$. In particular, since $\theta+\ga$ is a primitive element of $\mathbb{F}_{64}$ and $(\theta+\ga)^7=(\theta^{-1}\beta)^7=\beta^7$, so that $[(\theta+\ga)^4]^7=\beta$, we may take $\al=(\theta+\ga)^4=\theta^2+(1+\theta^2)\beta \in \mathbb{F}_{64}^\times$.

Consider the constacyclic code $\cC \subset \mathbb{F}_8[x]/(f(x))$ of length 9 over $\mathbb{F}_8$ with generator polynomial $h(x)=\prod_{i=-3}^4 (x-\al\beta^i)$. By Lemma 2, this code is MDS with parameters $[9, 1, 9]_8$. We have $k=1$ and $\delta=1$.

To give a parity-check matrix of $\cC$, we start from the matrix

$$\ol{H}=\begin{bmatrix}
1 & \al\beta & (\al\beta)^2 & (\al\beta)^3 & (\al\beta)^4 & (\al\beta)^5 & (\al\beta)^6 & (\al\beta)^7 & (\al\beta)^8\\
1 & \al\beta^2 & (\al\beta^2)^2 & (\al\beta^2)^3 & (\al\beta^2)^4 & (\al\beta^2)^5 & (\al\beta^2)^6 & (\al\beta^2)^7 & (\al\beta^2)^8\\
1 & \al\beta^3 & (\al\beta^3)^2 & (\al\beta^3)^3 & (\al\beta^3)^4 & (\al\beta^3)^5 & (\al\beta^3)^6 & (\al\beta^3)^7 & (\al\beta^3)^8\\
1 & \al\beta^4 & (\al\beta^4)^2 & (\al\beta^4)^3 & (\al\beta^4)^4 & (\al\beta^4)^5 & (\al\beta^4)^6 & (\al\beta^4)^7 & (\al\beta^4)^8
\end{bmatrix}$$
\noindent{\footnotesize
\arraycolsep=2pt
\medmuskip = 1mu
$$=\begin{bmatrix}
1 & (1+\theta^2)+(1+\theta^2)\beta & (1+\theta^2)\beta & (1+\theta+\theta^2)+\theta\beta & (1+\theta+\theta^2)+(1+\theta^2)\beta & 1+\theta^2\beta & (1+\theta+\theta^2)+(1+\theta)\beta & \theta+\beta & \theta^2+(1+\theta^2)\beta\\
1 & (1+\theta^2)+\theta^2\beta & 1+(1+\theta+\theta^2)\beta & \theta^2 & \theta+(\theta+\theta^2)\beta & \theta^2+\beta & \theta+\theta^2 & (1+\theta)+(1+\theta^2)\beta & (\theta+\theta^2)+\theta^2\beta\\
1 & \theta^2+(\theta+\theta^2)\beta & \theta^2+\theta^2\beta & (1+\theta)+\theta\beta & (1+\theta+\theta^2)\beta & \theta^2+\theta\beta & 1+(1+\theta)\beta & (1+\theta^2)+(1+\theta)\beta & (1+\theta)+(\theta+\theta^2)\beta\\
1 & (\theta+\theta^2)+(1+\theta)\beta & (1+\theta+\theta^2)+\beta & (1+\theta+\theta^2)+\theta\beta & \theta+\theta\beta & 1+(\theta+\theta^2)\beta & (1+\theta+\theta^2)+(1+\theta)\beta & 1+\theta\beta & (1+\theta)\beta
\end{bmatrix}$$
}
$$=\begin{bmatrix}
\ul{k}_{3;1}+\ul{k}_{3;0}\beta\\
\ul{k}_{3;2}+\ul{k}_{3;-1}\beta\\
\ul{k}_{3;3}+\ul{k}_{3;-2}\beta\\
\ul{k}_{3;4}+\ul{k}_{3;-3}\beta
\end{bmatrix}$$
where each $\ul{k}_{3; i} \in \mathbb{F}_8^9, -3 \le i \le 4$. As $\{1, \beta\}$ is a basis of $\mathbb{F}_{64}=\mathbb{F}_8(\beta)$ over $\mathbb{F}_8$, a parity-check matrix of $\cC$ is given by
$$H=\begin{bmatrix}
\ul{k}_{3;1}\\
\ul{k}_{3;0}\\
\ul{k}_{3;2}\\
\ul{k}_{3;-1}\\
\ul{k}_{3;3}\\
\ul{k}_{3;-2}\\
\ul{k}_{3;4}\\
\ul{k}_{3;-3}
\end{bmatrix}$$
$$=\begin{bmatrix}
1 & 1+\theta^2 & 0 & 1+\theta+\theta^2 & 1+\theta+\theta^2 & 1 & 1+\theta+\theta^2 & \theta & \theta^2\\
0 & 1+\theta^2 & 1+\theta^2 & \theta & 1+\theta^2 & \theta^2 & 1+\theta & 1 & 1+\theta^2\\
1 & 1+\theta^2 & 1 & \theta^2 & \theta & \theta^2 & \theta+\theta^2 & 1+\theta & \theta+\theta^2\\
0 & \theta^2 & 1+\theta+\theta^2 & 0 & \theta+\theta^2 & 1 & 0 & 1+\theta^2 & \theta^2\\
1 & \theta^2 & \theta^2 & 1+\theta & 0 & \theta^2 & 1 & 1+\theta^2 & 1+\theta\\
0 & \theta+\theta^2 & \theta^2 & \theta & 1+\theta+\theta^2 & \theta & 1+\theta & 1+\theta & \theta+\theta^2\\
1 & \theta+\theta^2 & 1+\theta+\theta^2 & 1+\theta+\theta^2 & \theta & 1 & 1+\theta+\theta^2 & 1 & 0\\
0 & 1+\theta & 1 & \theta & \theta & \theta+\theta^2 & 1+\theta & \theta & 1+\theta
\end{bmatrix}$$

Now let $H_{0}$ be the first 6 rows of $H$ and $H_{1}$ be the rest 2 rows. Consider the convolutional code $V$ with parity-check matrix

$$H(D)=H_{0}+\widetilde{H}_{1}D=\begin{bmatrix}
\ul{k}_{3;1}\\
\ul{k}_{3;0}\\
\ul{k}_{3;2}\\
\ul{k}_{3;-1}\\
\ul{k}_{3;3}+\ul{k}_{3;4}D\\
\ul{k}_{3;-2}+\ul{k}_{3;-3}D
\end{bmatrix}$$
\noindent{\footnotesize
\arraycolsep=2pt
\medmuskip = 1mu
$$=\begin{bmatrix}
1 & 1+\theta^2 & 0 & 1+\theta+\theta^2 & 1+\theta+\theta^2 & 1 & 1+\theta+\theta^2 & \theta & \theta^2\\
0 & 1+\theta^2 & 1+\theta^2 & \theta & 1+\theta^2 & \theta^2 & 1+\theta & 1 & 1+\theta^2\\
1 & 1+\theta^2 & 1 & \theta^2 & \theta & \theta^2 & \theta+\theta^2 & 1+\theta & \theta+\theta^2\\
0 & \theta^2 & 1+\theta+\theta^2 & 0 & \theta+\theta^2 & 1 & 0 & 1+\theta^2 & \theta^2\\
1+D & \theta^2+(\theta+\theta^2)D & \theta^2+(1+\theta+\theta^2)D & (1+\theta)+(1+\theta+\theta^2)D & \theta D & \theta^2+D & 1+(1+\theta+\theta^2)D & (1+\theta^2)+D & 1+\theta\\
0 & (\theta+\theta^2)+(1+\theta)D & \theta^2+D & \theta+\theta D & (1+\theta+\theta^2)+\theta D & \theta+(\theta+\theta^2)D & (1+\theta)+(1+\theta)D & (1+\theta)+\theta D & (\theta+\theta^2)+(1+\theta)D
\end{bmatrix}$$
}
By Theorem \ref{5:thm4}, $V$ is strongly-MDS with parameters $(9, 3, 2)_8$ and also has a maximal distance profile. The free distance of $V$ is $9$.

\noindent{\bf{Example 11: MDS $(9, 4, 2)_8$}}

We use the cyclic code $\cC$ as in {\bf Example 9}, but we instead use {\bf Construction one} in this case. We have $k=2$ and $\delta=1$.

We let $H_0$ be the first 5 rows of $H$ and $H_1$ be the rest 2 rows. Consider the convolutional code $V$ with parity-check matrix

$$H(D)=H_0+\widetilde{H}_1D=\begin{bmatrix}
\ul{k}_{1;0}\\
\ul{k}_{1;1}\\
\ul{k}_{1;-1}\\
\ul{k}_{1;2}+\ul{k}_{1;3}D\\
\ul{k}_{1;-2}+\ul{k}_{1;-3}D
\end{bmatrix}$$
\noindent{\footnotesize
\arraycolsep=3pt
\medmuskip = 1mu
$$=\begin{bmatrix}
1 & 1 & 1 & 1 & 1 & 1 & 1 & 1 & 1\\
1 & 0 & 1 & \theta & 1+\theta^2 & 1+\theta & 1+\theta & 1+\theta^2 & \theta\\
0 & 1 & \theta & 1+\theta^2 & 1+\theta & 1+\theta & 1+\theta^2 & \theta & 1\\
1+D & 1+\theta D & (1+\theta^2)+(1+\theta)D & (1+\theta)+D & \theta+\theta D & (1+\theta)D & \theta+D & (1+\theta)+\theta D & (1+\theta^2)+(1+\theta)D\\
0 & \theta+(1+\theta^2)D & (1+\theta)+(1+\theta^2)D & 1+\theta^2 & 1+(1+\theta^2)D & 1+(1+\theta^2)D & 1+\theta^2 & (1+\theta)+(1+\theta^2)D & \theta+(1+\theta^2)D
\end{bmatrix}$$
}
By Theorem \ref{5:thm3}, $V$ is strongly-MDS with parameters $(9, 4, 2)_8$ and also has a maximal distance profile. The free distance of $V$ is $8$.

\section{Conclusion}\label{conclude}

Maximum-distance separable (MDS) convolutional codes are characterized by the property that the free distance attains the generalized Singleton bound. Thus MDS convolutional codes form an optimal family of convolutional codes, the study of which is of great importance. There are very few general algebraic constructions of MDS convolutional codes. In this paper, we construct a large family of unit-memory MDS convolutional code over $\F$ of length $n \le q+1$ with flexible parameters. Compared with previous works, the field size $q$ required to define these codes is much smaller. The construction also leads to many new strongly-MDS convolutional codes, an important subclass of MDS convolutional codes proposed and studied in \cite{GL2}. At the end of the paper we present many examples to illustrate the construction.

\subsection*{Acknowledgment}
The research of M. Xiong was supported by RGC grant number 606211 and 609513 from Hong Kong.


\end{document}